\documentclass[onecolumn,draftcls]{IEEEtran}

\usepackage{latexsym}
\usepackage{bm}
\usepackage[dvips]{graphics}
\usepackage{graphicx}
\usepackage{psfrag}
\usepackage{amsfonts,amsmath,amssymb}
\usepackage{dsfont,color,subfigure}

\DeclareMathOperator*{\argmax}{arg\,max}


\usepackage{xspace}
\usepackage{bbm}

%
%
%
%
%
%
%
%
%
%
%
%
\def\In{\mathop{\rm In}\nolimits}%
\def\Out{\mathop{\rm Out}\nolimits}%
%

\newcommand{\Ac}{\mathcal{A}}
\newcommand{\Bc}{\mathcal{B}}

\newcommand{\Dc}{\mathcal{D}}
\newcommand{\Ec}{\mathcal{E}}

\newcommand{\Hc}{\mathcal{H}}
\newcommand{\Ic}{\mathcal{I}}
\newcommand{\Kc}{\mathcal{K}}
\newcommand{\Mc}{\mathcal{M}}
\newcommand{\Nc}{\mathcal{N}}
\newcommand{\Oc}{\mathcal{O}}

\newcommand{\Rc}{\mathcal{R}}
\newcommand{\Sc}{\mathcal{S}}
\newcommand{\Tc}{\mathcal{T}}
\newcommand{\Uc}{\mathcal{U}}
\newcommand{\Vc}{\mathcal{V}}
\newcommand{\Wc}{\mathcal{W}}
\newcommand{\Xc}{\mathcal{X}}
\newcommand{\Yc}{\mathcal{Y}}

\newcommand{\Uv}{{\bf U}}

\newcommand{\xv}{{\bf x}}
\newcommand{\yv}{{\bf y}}
\newcommand{\hv}{{\bf h}}

\newcommand{\rv}{{\bf r}}




\newcommand{\Mh}{{\hat{M}}}

\newcommand{\Uh}{{\hat{U}}}

\newcommand{\mh}{{\hat{m}}}


\newcommand{\Mt}{{\tilde{M}}}

\newcommand{\Ut}{{\tilde{U}}}

\newcommand{\Yt}{{\tilde{Y}}}

\newcommand{\yt}{{\tilde{y}}}

\def\a{\alpha}
\def\b{\beta}
\def\g{\gamma}
\def\d{\delta}
\def\e{\epsilon}

\def\l{\lambda}

\DeclareMathOperator\E{E}
\let\P\relax
\DeclareMathOperator\P{P}





\newcommand\ie{i.e.,\xspace}
\def\textiid{i.i.d.\@\xspace}
\newcommand\iid{\ifmmode\text{ i.i.d. } \else \textiid \fi}

\newcommand{\ind}{\mathbbmss{1}}


\newtheorem{definition}{Definition}

\newtheorem{remark}{Remark}

\newtheorem{theorem}{Theorem}
\newtheorem{lemma}{Lemma}

\newtheorem{corollary}{Corollary}

\begin{document}

\title{On Capacity Region of Wiretap Networks}

\author{Shirin Jalali and Tracey Ho}
\maketitle

\newcommand{\p}{\mathds{P}}
\newcommand{\Lc}{\mathc al{L}}
\newcommand{\Jc}{\mathcal{J}}
\newcommand{\mb}{\mathbf{m}}
\newcommand{\bb}{\mathbf{b}}
\newcommand{\Xb}{\mathbf{X}}
\newcommand{\Yb}{\mathbf{Y}}
\newcommand{\Ub}{\mathbf{U}}
\newcommand{\La}{\Lambda}
\newcommand{\su}{\underline{s}}
\newcommand{\xu}{\underline{x}}
\newcommand{\yu}{\underline{y}}
\newcommand{\Xu}{\underline{X}}
\newcommand{\Yu}{\underline{Y}}
\newcommand{\Uu}{\underline{U}}
\newcommand{\Uhv}{\hat{\Uv}}
\newcommand{\Utv}{\mathbf{\Ut}}
\newcommand{\Ytv}{\mathbf{\Yt}}
\newcommand{\ytv}{\mathbf{\yt}}

\begin{abstract}

In this paper we study the problem of secure communication over networks in which an unknown subset of nodes or links can be wiretapped. We first consider  a general multi-terminal discrete memoryless network (DMN) described by its channel transition function from network inputs to network outputs, observed by network nodes and the eavesdropper. We prove that in such general networks with multiple sources and sinks,  the capacity regions subject to strong and weak secrecy requirements are equal. We then focus on the special case of  noiseless wiretap networks, i.e.,  wired  networks  of noiseless point-to-point directed channels, where an unknown subset of links, selected from a known  collection of such subsets,  can be wiretapped.  We derive  inner and outer bounds on the  capacity regions of such networks   in terms of the entropic region, for both zero probability of  error  and asymptotically zero  probability of error.

\end{abstract}


\section{Introduction}\label{sec:intro}

Consider the problem of  secure communication over multi-terminal  discrete memoryless networks (DMN), where the communication channel is described by a general transition function from the network nodes inputs to their outputs.  An adversary eavesdrops on an unknown subset of nodes selected from a known collection. The goal is to maximize the transmitted data rates while leaking as little information as possible to the eavesdropper. 

Different notions of secrecy have been developed in the literature to  quantify the information leakage to the adversary. In general, most of such notions can be categorized as  either computational   or information theoretic.  In this paper, we focus on the latter, where the information leakage is usually measured in terms of  the mutual information between the adversary's observations and the source messages. Depending on the convergence rate of the defined mutual information to zero, the system is said to achieve perfect secrecy, strong secrecy or weak secrecy. If the mutual information between the source messages and the adversary's observations is exactly equal to zero, then the system is perfectly secure. In that case, the source messages and the information available to the adversary are completely independent. In the case of strong secrecy, the described mutual information is required to get arbitrarily small as the blocklength grows, which implies almost independence. Finally, in the weak secrecy model, the mutual information is normalized by the blocklength, and hence the ``rate'' of  information leakage is required to be arbitrarily small.

The problem of information theoretically secure communication was originally introduced  by Shannon in \cite{Shannon:49}. He considered  a point to point public channel, where a transmitter  wants to send message $M$ uniformly distributed over $\{1,\ldots,2^{nR_s}\}$ to a receiver. A private random key   uniformly distributed over $\{1,\ldots,2^{nR_k}\}$ is available to both of them.  The transmitter encodes its message $M$ using key $K$ into codeword $W=W(M,K)$.  The adversary observes the encoded codeword $W$ sent over the public channel, but does not have access to $K$. Shannon proved that to ensure perfect secrecy, \ie $I(W;M)=0$, the rate of the key, $R_k$, should  at least be as large as the source data rate $R_s$. This rate is also sufficient; in fact, letting $R_k=R_s$, a simple linear code, $W=M \oplus K$, achieves the capacity in this model.

Secure communication over a wiretap channel was  defined by Wyner in \cite{Wyner:75}. There the transmitter attempts to send a message to the receiver over a broadcast channel described by $p(y,z|x)$. The intended  receiver observes output $y$ of the channel, while an adversary observes output $z$. In \cite{CsiszarK:78}, it was shown that the secrecy capacity, $C_s$, of the wiretap channel subject to the weak secrecy constraint is equal to $C_s=\max_{p(u,x)}(I(U;Y)-I(U;Z))$. Subsequently it was shown in  \cite{MaurerW:00} that in fact $C_s$ is also the secrecy capacity of the wiretap channel subject to the strong secrecy requirement, \ie $I(M;Z^n)\leq \e$ instead of $I(M;Z^n)\leq n\e$.

The problem of secure communication over  \emph{wiretap networks}   was  proposed and  studied  in \cite{CaiY:02}. In a wiretap network,  data communication takes place over a noiseless wired network of point-to-point directed links. An  adversary observes the messages sent over  a  subset of links. In \cite{CaiY:02} and its subsequent work \cite{CaiY:11}, the authors  considered a multicast problem where all messages and keys are generated at a single source, and defined secure network coding capacity subject to  zero-error communication and  perfect secrecy. When all the links  have unit capacity and the adversary can choose to access any subset of links of size at most $r$, they proved  that the secrecy capacity  is equal to  $c_s=n-r$, where $n$ denotes the max-flow  capacity from the source to all sinks. Moreover, they proved that linear codes are sufficient to achieve $c_s$. (Refer to \cite{CaiC:11-ieee} for  a comprehensive review of the literature on   linear network coding subject to security constraints.)

The problem of general multi-source multi-destination secure network coding  was studied in \cite{ChanG:08}, which considered asymptotically zero error communication subject to weak secrecy constraint  and presented inner and outer bounds on the set of achievable rates. The outer bound  and the inner bound were presented in terms of ``almost entropic pseudo-variables''  and random variables, respectively, each satisfying certain  conditions.  In \cite{ChanG:12-arxiv} the authors present an outer bound in terms of entropic region on the network coding capacity region subject to weak secrecy requirement.

In this paper we extend the above results in several directions. We first consider  a general DMN, described by a general channel transition function from network inputs to outputs observed by network nodes and the eavesdropper. We prove that in such general networks, with multiple sources and multiple sinks,  the capacity regions subject to weak and strong  secrecy are equal.  DMNs are often used to model general wireless networks. Hence, this result establishes  the equivalence of  the capacity regions  subject to strong and weak secrecy requirements in communication of independent sources over general noisy memoryless channels. This result  is a generalization of \cite{MaurerW:00}, where the same result was proved for a single wiretap channel. We extend the approach of \cite{MaurerW:00} to general networks with arbitrary communication demands among multiple terminals. The main ingredient of the proof in \cite{MaurerW:00} is ``extractor functions'', a well-studied tool in computer science. We employ  the same machinery here. The main difference between our work and \cite{MaurerW:00}  is that here we consider a general secure network coding problem with multiple sources and multiple sinks. As a corollary, our result also proves  that the network coding capacity regions of noiseless  wiretap networks  with asymptotically zero probability of error subject to strong secrecy   and  subject to weak secrecy  are equal.

We then focus on the special case of  wiretap networks, defined as an error-free network of  directed point-to-point links with an adversary eavesdropping  on an unknown subset of links chosen from a given collection of subsets. A typical example of such  collections is the set of all up to $r$-link subsets of edges. The selected edges are not known to the communicating users, and the goal is to maximize the communication rates while not leaking  information to the eavesdropper. For such networks we derive inner and outer bounds on the zero-error capacity region subject to weak secrecy constraint in terms of ``entropic region''. As will be explained later, in this case we consider variable-length coding across the channels. We  also derive inner and outer bounds on the capacity region subject to weak secrecy constraint and asymptotically zero probability of decoding error in terms of entropic region. Our outer bound is tighter than the bound derived in  \cite{ChanG:12-arxiv}. Our first result, \ie equivalence of network coding capacity regions subject to weak and strong secrecy requirements, states that the  inner and outer bounds   also hold in the case of strong secrecy.

In \cite{DikaliotisY:12,Ted_thesis}, the authors consider networks of wiretap channels. In the case where the eavesdropper has access only to the output of one channel chosen from a known collection, they  show that the weak secrecy capacity region is equal to the weak secrecy capacity region of another  network derived by replacing each noisy wiretap channel by an equivalent noiseless model.   Our result proves that both for the wired network and wireless network the capacity regions do not shrink  by requiring strong secrecy. Therefore, it shows the equivalence established in  \cite{DikaliotisY:12,Ted_thesis} also holds under strong secrecy condition.

The organization of this paper is as follows. Section \ref{sec:notation} defines the notations used throughout the paper, and also summarizes the definitions and some relevant properties  of typical sequences and the entropic region. In Section \ref{sec:DMN}, we extend the result proved in \cite{MaurerW:00} for wiretap channels to general DMNs and  show that  the capacity regions of  a general DMN  subject to weak and strong secrecy requirements are the same.   Section \ref{sec:wiretap} reviews the noiseless  wiretap network model  and defines the secure network coding capacity region of such networks  under different notions of secrecy and also different  probability of error requirements. Section \ref{sec:c-region-0} and \ref{sec:c-region-e} present inner and outer bounds on the network coding capacity region under weak secrecy constraint, for zero error and asymptotically  zero error, respectively. Finally, Section \ref{sec:conclusion} concludes the paper.

\section{Notation and definitions}\label{sec:notation}

\subsection{Notation}

For integers $i\leq j$, let $[i:j]\triangleq\{i,i+1,\ldots,j\}$. Random variables are denoted by upper case letters such as $X$.  For  random variable $X$, let script letter $\Xc$ denote its alphabet set.  Given vector $(X_1,\ldots,X_n)$ and  set $\a\subseteq \{1,\ldots,n\}$, let $X_{\a}\triangleq(X_i:i\in\a)$. For a set $\Xc$, let $2^{\Xc}$ denote the set of all subsets of $\Xc$. For  set $\Ac$,    $\overline{\Ac}$ denotes its closure, and  $\overline{\rm con}(\Ac)$ denotes its convex closure, which is defined as the smallest closed and convex set containing $\Ac$.

For $\xv,\yv\in\mathds{R}^n$,  $\xv\leq \yv$, if and only if, $x_i\leq y_i$, for all $i=1,\ldots,n$. For $\Ac \subset\mathds{R}^{n}$, let
\[
\Lambda(\Ac)\triangleq \{\xv\in\mathds{R}^{n}:\; \xv \leq \yv, {\rm for}\;{\rm some }\; \yv\in\Ac\}.
\]
Also, define
\[
D(\Ac)\triangleq \{\a\xv: \a\in[0,1], \xv\in\Ac \}.
\]
For $\Ic\subset[1:n]$ and $\Ac\subset \mathds{R}^n$, define the projection operation ${\rm Proj}(\cdot)$ over the coordinated defined by $\Ic$ as
\[
{\rm Proj}_{\Ic}(\Ac)\triangleq \{\xv_{\Ic}:\; \xv \in\Ac\},
\]
where $\xv_{\Ic}=(x_i:\;i\in\Ic)$.

The total variation distance between probability distributions  $p_1$ and $p_2$ defined over set $\Xc$ is defined  as $d_{\rm TV}(p_1,p_2)\triangleq0.5\sum_{x\in\Xc}|p_1(x)-p_2(x)|$. The  ``min-entropy'' of random variable $X$ with pdf $p(x): \Xc\to\mathds{R}^{+}$, $H_{\infty}(X)$, is defined as
\[
H_{\infty}(X)\triangleq\min_{x\in\Xc} \log{1\over p(x)}.
\]

\subsection{Typical sequences}

For a sequence $x^n\in\Xc^n$, the empirical distribution of $x^n$ is defined as
\[
\pi(x|x^n)\triangleq {|\{i: x_i=x\}|\over n},
\]
for all $x\in\Xc$. For $(x^n,y^n)\in\Xc^n\times\Yc^n$, the joint empirical distribution of $(x^n,y^n)$ is defined as
\[
\pi(x,y|x^n,y^n)\triangleq {|\{i: (x_i,y_i)=(x,y)\}|\over n},
\]
for all $(x,y)\in\Xc\times\Yc$.

We adopt the definition of strongly typical sequences introduced in \cite{OrlitskyR:01}. Given random variable $X$ distributed as $p(x)$, $x\in\Xc$, and $\e\in(0,1)$, the set of $\e$-typical sequences of length $n$ is defined as
\[
\Tc_{\e}^{(n)}(X) \triangleq \{x^n:|\pi(x|x^n)-p(x)|\leq \e\cdot p(x), \forall \;x\in\Xc\}.
\]
Similarly, for random variables $(X,Y)$ jointly distributed as $p(x,y)$,
\[
\Tc_{\e}^{(n)}(X,Y) \triangleq \{(x^n,y^n):|\pi(x,y|x^n,y^n)-p(x,y)|\leq \e\cdot p(x,y), \forall \;(x,y)\in\Xc\times\Yc\}.
\]
Given discrete random variable $X$,
 \begin{align}
(1-\e) 2^{n(1-\e)H(X)}< | \Tc_{\e}^{(n)}(X)| < 2^{n(1+\e)H(X)},\label{eq:size-typical-set}
 \end{align}
 for $n$  large enough \cite{ElGamalK_book}.

Consider discrete random variable $X$, and assume that $Y=g(X)$, where $g: \Xc\to\Yc$ is a deterministic function.
\begin{lemma}\label{lemma:typical-g}
If $x^n\in\Tc_{\e}^{(n)}(X)$, and $y^n=g(x^n)$, \ie $y_i=g(x_i)$, then $y^n\in\Tc_{\e}^{(n)}(Y)$.
\end{lemma}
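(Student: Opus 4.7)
The plan is to unpack both the definition of $\Tc_{\e}^{(n)}(Y)$ and the way the empirical distribution of $y^n$ is induced from that of $x^n$, and then bound the deviation $|\pi(y\mid y^n)-p_Y(y)|$ by summing the corresponding deviations on the $X$-side over the preimage $g^{-1}(y)$.

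First, I would write down the key counting identity: for every $y\in\Yc$,
\[
\pi(y\mid y^n)\;=\;\frac{|\{i:y_i=y\}|}{n}\;=\;\frac{|\{i:g(x_i)=y\}|}{n}\;=\sum_{x\in g^{-1}(y)}\pi(x\mid x^n),
\]
which follows because $y_i=g(x_i)$. The corresponding identity on the distribution side is the standard pushforward, $p_Y(y)=\sum_{x\in g^{-1}(y)}p_X(x)$, where $Y=g(X)$.

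Next, I would subtract these two identities and apply the triangle inequality followed by the typicality hypothesis $|\pi(x\mid x^n)-p_X(x)|\leq \e\,p_X(x)$ termwise:
\[
|\pi(y\mid y^n)-p_Y(y)|\;\leq\;\sum_{x\in g^{-1}(y)}|\pi(x\mid x^n)-p_X(x)|\;\leq\;\sum_{x\in g^{-1}(y)}\e\,p_X(x)\;=\;\e\,p_Y(y).
\]
Since this holds for every $y\in\Yc$, the definition gives $y^n\in\Tc_{\e}^{(n)}(Y)$, completing the argument.

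There is essentially no obstacle here; the only thing worth being careful about is the edge case $p_Y(y)=0$, which forces $p_X(x)=0$ for all $x\in g^{-1}(y)$, hence $\pi(x\mid x^n)=0$ for those $x$ by the typicality of $x^n$, hence $\pi(y\mid y^n)=0$, so the bound still holds. The argument uses only the deterministic nature of $g$ and the additivity of the empirical measure under partitioning the alphabet by the preimages $\{g^{-1}(y):y\in\Yc\}$.
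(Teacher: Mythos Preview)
Your proof is correct and follows essentially the same approach as the paper: both arguments partition $\Xc$ by the preimages $g^{-1}(y)$, express $\pi(y\mid y^n)$ and $p_Y(y)$ as sums over that preimage, and then apply the triangle inequality together with the typicality bound on $x^n$. Your explicit treatment of the edge case $p_Y(y)=0$ is a small bonus the paper leaves implicit.
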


\begin{proof}
For $y\in\Yc$, define $\Xc_y\triangleq\{x\in\Xc: g(x)=y\}$. Since $g$ is a deterministic function, $(\Xc_y: y\in\Yc)$ forms a partition of $\Xc$. For every $y\in\Yc$, we have
\begin{align}
n^{-1}|\pi(y|x^n)-p(y)|&=|\pi(y|y^n)-\sum_{x\in\Xc_y}p(x)|\nonumber\\
&=\Big|\sum_{x\in\Xc_y}\pi(x|x^n)-\sum_{x\in\Xc_y}p(x)\Big|\nonumber\\
&\leq \sum_{x\in\Xc_y}|\pi(x|x^n)-p(x)|\nonumber\\
&\leq \sum_{x\in\Xc_y}\e\cdot p(x)\nonumber\\
&=\e\cdot p(y),
\end{align}
which, by definition, shows that $y^n\in\Tc_{\e}^{(n)}(Y)$.
\end{proof}

Lemma \ref{lemma:typical-g} simplifies our analysis in the next sections. Note that $X$ in Lemma \ref{lemma:typical-g} can also be a vector, $(X_1,\ldots,X_m)$, where $X_1,\ldots,X_m$ are jointly distributed finite-alphabet   random variables. In that case, for $Y=g(X_1,\ldots,X_m)$, by Lemma \ref{lemma:typical-g}, if $(x_1^n,\ldots,x_m^n)\in\Tc_{\e}^{(n)}(X_1,\ldots,X_m)$ and $y_i=g(x_{1,i},\ldots,x_{m,i})$, $i=1,\ldots,n$, then $y^n\in\Tc_{\e}^{(n)}(Y)$ as well.

\subsection{Entropic region}

In this section, following  the notations used in \cite{yeung}, we briefly review the definitions of  the entropy function  of random variables and the entropic region.

Assume that $(X_1,\ldots,X_n)$ are $n$ jointly distributed finite-alphabet random variables. Define $\hv\in\mathds{R}^{2^n}$ to denote the entropy function of $(X_1,\ldots,X_n)$ defined as follows.  For $\a\subseteq[1:n]$,
\[
h_{\a}\triangleq H(X_{\a})=H(X_i:\;i\in\a).
\]
For $\a=\emptyset$, $h_{\a}=0$.
Let $\Hc_{n}$ denote the set of all possible $2^n$ dimensional vectors whose components are labeled by the $2^n$ subsets of $[1:n]$.   A vector $\hv\in\Hc_n$ is called entropic, if and only if $\hv$ is equal to the entropy function of  $n$ jointly distrusted finite-alphabet random variables $(X_1,\ldots,X_n)$. The  entropic region $\Gamma_n^{*}$ is defined as the set of all possible entropic vectors of $n$ random variables, \ie
\[
\Gamma_n^{*} \triangleq \{\hv\in\Hc_n\;:\; \hv\;{\rm is }\;{\rm entropic}\}.
\]
While $\Gamma_n^{*} $ is not in general a convex set, it is known that its closure $\overline{\Gamma_n^{*}} $ is a convex cone \cite{yeung}.
Characterizing the entropic region is a fundamental open problem in information theory.

Entropy function is  a submodular function, \ie given jointly distributed discrete random variables $X_1,\ldots,X_n$, for any $\a,\b\subseteq [1:n]$,
\[
H(X_{\a\cap\b})+H(X_{\a\cup\b})\leq H(X_{\a})+H(X_{\b}),
\]

 Let $\Gamma_n\subseteq\Hc_n$ denote the set of vectors in $\Hc_n$ that satisfy the basic inequalities of Shannon's information measures \cite{ZhangY:98}, \ie
\[
\Gamma_n \triangleq \{\hv\in\Hc_n\;:\; i(\a\; ; \b|\g)\geq 0, \forall \;\a,\b,\g\subseteq[1:n]\},
\]
where $i(\a;\b|\g)\triangleq I(X_{\a};X_{\b}|X_{\g})$.
It is known that for $n=2$, $\Gamma_2=\Gamma_n^{*} $ and for $n=3$, $\Gamma_3^{*}\neq\Gamma_3$, but $\overline{\Gamma_3^{*}}=\Gamma_3$. But for $n>3$, it is known that $\overline{\Gamma_n^{*}} \subset \Gamma_n$ \cite{yeung}.


\section{Discrete memoryless networks}\label{sec:DMN}

Consider a general discrete memoryless network (DMN). (See Fig.~\ref{fig:dmn}.) Let $\Vc$ denote the set of  $m$ nodes in the network.  The channel is described by its transition function $\{p(y^m|x^m)\}_{(x^m,y^m)\in\Xc^m\times\Yc^m}$. Node $i\in\Vc$ communicates with other nodes in the network through controlling input $X_i$ of the channel and receiving output $Y_i$. Let $\Sc\subset\Vc$ denote the set of source nodes in the network. Also let $\Ac$ denote the collection of potential sets of adversaries. Each $\a\in\Ac$ represents a subset of  $\Vc$ that might be accessed by the adversary. The adversary  chooses one and only one  subset $\a\in\Ac$ and observes the outputs of the nodes in $\a$. Clearly, $\Sc\cap \a=\emptyset$, for $\a\in\Ac$. The goal is to minimize the data leakage to the adversary, while maximizing the throughput.

The coding operations are performed as follows. Each node $s\in\Sc$ observes message $M_s\in[1:2^{nr_s}]$, and desires to communicate this message to a subset of nodes $\Dc_s\subset\Vc\backslash s$. The messages are described to the destination nodes through a  code of block length $n$. At time $i\in[1:n]$, node $s\in\Sc$, based on its received signals up to time $i-1$ and its own message $M_s$ generates its channel input $X_{s,i}$ as $X_{s,i}=f_{s,t}(M_s, Y_s^{i-1})$, where
\[
f_{s,i}: [1:2^{nr_s}]\times \Yc_s^{i-1} \to \Xc_s.
\]
At time $i\in[1:n]$, node $v\in\Vc\backslash\Sc$ generates its channel input $X_{v,i}$ as a function of its received signals as  $X_{v,i}=f_{v,i}(Y_{v}^{i-1})$, where
\[
f_{v,i}: \Yc_v^{i-1} \to \Xc_v.
\]

\begin{figure}[t]
\begin{center}
\psfrag{p}{\footnotesize $\;\;\;\;\;\;p(y_1,\ldots,y_m|x_1,\ldots,x_m)$}
\psfrag{x}{\footnotesize $\to X_j $}
\psfrag{y}{\footnotesize $\leftarrow Y_j$}
\includegraphics[width=6.2cm]{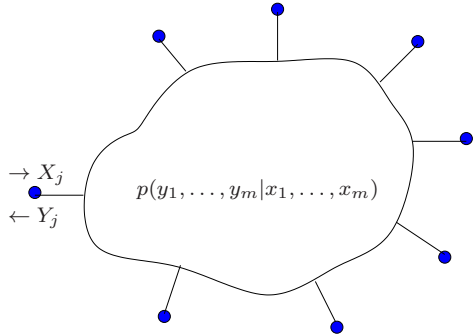} \caption{General discrete memoryless network (DMN)}\label{fig:dmn}
\end{center}
\end{figure}

Let   $\Mh_{s\to t}$ denote the reconstruction of message $M_s$ at node $t\in\Dc_s$. If node $t\in\Dc_s$ is also a source node, \ie $t\in\Sc$, then it reconstructs  $M_s$ as a function of its received signals $Y_t^n$ and  its own message $M_t$. In other words, $\Mh_{s\to t}=g_{s\to t}(Y_t^n,M_t)$, where
\[
g_{s\to t}:\Yc_t^n \times [1:2^{nr_t}]\to [1:2^{nr_s}].
\]
 If node $t\in\Dc_s$ is not a source node, \ie $t\notin\Sc$, then it reconstructs  $M_s$ only as a function of its received signals $Y_t^n$. In other words, $\Mh_{s\to t}=g_{s\to t}(Y_t^n)$, where
\[
g_{s\to t}:\Yc_t^n \to [1:2^{nr_s}].
\]

Coding rate vector $(r_s:\;s\in\Sc)$ is said to be achievable subject to ``weak'' secrecy, if for any $\e>0$ and  $n$ large enough, there exists a code of blocklength $n$  at message rates $(r^{(n)}_s: s\in\Sc)$, such that 
\begin{enumerate}
\item[i.] Source rates: For every $s\in\Sc$,
\[
 r^{(n)}_s\geq r_s-\e.
\]
\item[ii.] Reliable reconstructions: For every $s\in\Sc$ and $t\in\Dc_s$,
\[
\P(M_s\neq \Mh_{s\to t})\leq \e.
\]
\item[iii.] Secrecy: For every $\a\in\Ac$,
\[
{1\over n}I(Y_{\a}^n;M_{\Sc})\leq \e,
\]
where $M_{\Sc}=(M_s: s\in\Sc)$ and $Y_{\a}^n=(Y_v^n: v\in\a)$.
\end{enumerate}
Coding rate vector $(r_s:\;s\in\Sc)$ is said to be achievable subject to ``strong'' secrecy, if for any $\e>0$ and  $n$ large enough, there exists a code with blocklength $n$ and source rates $(r^{(n)}_s: s\in\Sc)$, such that the following constraints are satisfied:
\begin{enumerate}
\item[i.] Source rates: For $s\in\Sc$,
\[
r^{(n)}_s\geq r_s-\e.
\]
\item[ii.] Reliable reconstructions: For every $s\in\Sc$ and $t\in\Dc_s$,
\[
\P(M_s\neq \Mh_{s\to t})\leq \e.
\]
\item[iii.] Secrecy: For every $\a\in\Ac$,
\[
I(Y_{\a}^n;M_{\Sc})\leq \e.
\]
\end{enumerate}
Given an $m$-node DMN described by $p(y^m|x^m)$, source nodes $\Sc$ and collection of sets of potential adversaries $\Ac$, let  $\Rc^{(w)}(p(y^m|x^m),\Sc,\Ac)$ denote  its set of achievable rates (capacity region)  subject to   weak secrecy. Similarly,  let $\Rc^{(s)}(p(y^m|x^m),\Sc,\Ac)$ denote  its set of achievable rates (capacity region)  subject to strong secrecy.

\subsection{Strong versus weak secrecy}
In this section we prove that for any general DMN, the capacity region subject to weak secrecy is equal to the capacity region subject to strong secrecy. This result is proved in  \cite{MaurerW:00} for wiretap channels. Here, we extend this result to general DMNs with multiple sources and multiple sinks.

\begin{theorem}\label{thm:strong_eq_weak}
For any set of sources $\Sc\subset [1:m]$, and any collection of  sets of nodes $\Ac\subset 2^{[1:m]}$ that are observed by the adversary,
\[
\Rc^{(s)}(p(y^m|x^m),\Sc,\Ac)=\Rc^{(w)}(p(y^m|x^m),\Sc,\Ac).
\]
\end{theorem}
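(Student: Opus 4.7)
The inclusion $\Rc^{(s)}\subseteq\Rc^{(w)}$ is immediate, since bounding $I(Y_\a^n;M_{\Sc})$ by a constant $\epsilon$ trivially forces $\tfrac{1}{n}I(Y_\a^n;M_{\Sc})\le\epsilon$, so the substance of the theorem is the reverse inclusion. For this direction the plan is to lift the privacy-amplification argument of \cite{MaurerW:00} from a single wiretap channel to an arbitrary DMN with multiple sources, sinks, and adversary sets, using universal hashing (``extractors'') as a post-processing layer on top of a weakly secure inner code.

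Fix $(r_s)_{s\in\Sc}\in\Rc^{(w)}$ and $\eta>0$. By weak achievability, for each $n$ large there is a blocklength-$n$ code $\Cc_n$ with per-source rates $R_s^{(n)}\ge r_s-\eta/2$, decoding error at most $\delta_n$, and leakage $I(Y_\a^n;M_{\Sc})\le n\delta_n$ for every $\a\in\Ac$, with $\delta_n\downarrow 0$. I would treat $\Cc_n$ as a black box and wrap it in an outer layer delivering strongly secret messages $(\Mt_s)_{s\in\Sc}$ at per-source rates $\tilde r_s=R_s^{(n)}-2\sqrt{\delta_n}-\eta/2$. Concretely, for each $s$ fix a $2$-universal hash $F_s:[1:2^{nR_s^{(n)}}]\to[1:2^{n\tilde r_s}]$ with equal-size preimages (e.g.\ a linear map over a finite field). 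Source $s$, wishing to transmit $\Mt_s$, draws $M_s$ uniformly from $F_s^{-1}(\Mt_s)$, so that the $M_s$'s are independent and uniform, matching the distribution assumed by $\Cc_n$; it then runs $\Cc_n$. Each destination $t\in\Dc_s$ first decodes $M_s$ (error $\le\delta_n$) and outputs $\Mt_s=F_s(M_s)$.

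For strong secrecy I would combine the weak bound with the Leftover Hash Lemma (LHL). From $I(M_{\Sc};Y_\a^n)\le n\delta_n$ together with Markov's inequality, outside a set of $Y_\a^n$-values of probability at most $\sqrt{\delta_n}$ the smooth conditional min-entropy of $M_{\Sc}$ is at least $n\sum_s R_s^{(n)}-n\sqrt{\delta_n}-o(n)$. The LHL applied to the product hash $F_{\Sc}=(F_s)_{s\in\Sc}$ then implies that the joint law of $(F_{\Sc},F_{\Sc}(M_{\Sc}),Y_\a^n)$ is exponentially close in total variation distance to $(F_{\Sc},U,Y_\a^n)$ with $U$ uniform and independent; continuity of mutual information in total variation distance yields $I(\Mt_{\Sc};Y_\a^n,F_{\Sc})\le\epsilon$ for any prescribed $\epsilon>0$ and $n$ large. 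Uniformity over $\a\in\Ac$ follows by a union bound since $|\Ac|\le 2^m$ is constant, and the seeds $F_{\Sc}$ require only $O(\log n)$ common random bits, so either prepending them to the transmission with vanishing rate penalty or derandomizing to a good fixed realization preserves the achieved rates.

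The main technical obstacle is the conversion of the Shannon-type weak bound $I(M_{\Sc};Y_\a^n)\le n\delta_n$ into the \emph{smooth min-entropy} guarantee required by the LHL; this is precisely the step underlying the weak-to-strong conversion of \cite{MaurerW:00}, and in the present multi-source, multi-adversary setting it must be carried out uniformly over $\a\in\Ac$ and jointly over the product source alphabet, which is why I favor a linear universal family with exactly uniform preimages over a generic $2$-universal one.
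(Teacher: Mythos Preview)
Your outer-layer privacy-amplification strategy matches the paper's, and you correctly flag the Shannon-to-min-entropy conversion as the crux; but the single-block argument you sketch for that step does not go through, and this is exactly where the paper (following \cite{MaurerW:00}) does something essential that you omit. Markov's inequality applied to $y\mapsto D(P_{M_{\Sc}\mid Y_\a^n=y}\Vert P_{M_{\Sc}})$ yields, on a set of $y$'s of probability $\ge 1-\sqrt{\delta_n}$, only the \emph{Shannon} bound $H(M_{\Sc}\mid Y_\a^n=y)\ge n\sum_s R_s^{(n)}-n\sqrt{\delta_n}$. There is no general inequality turning this into a smooth min-entropy lower bound of the same order: a single distribution on $[2^k]$ with Shannon entropy $k-d$ can have min-entropy as small as $\Theta(\log(k/d))$, and smoothing by any fixed $\epsilon<1/2$ need not help. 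Equivalently, Pinsker applied to $I(M_{\Sc};Y_\a^n)\le n\delta_n$ gives total-variation closeness to the product only of order $\sqrt{n\delta_n}$, which under weak secrecy is not required to vanish, so the smoothing parameter you would need diverges and the Leftover Hash Lemma delivers no nontrivial security. Your last paragraph says this conversion ``is precisely the step underlying'' \cite{MaurerW:00}, but what Maurer--Wolf actually do there is \emph{not} a single-shot Markov argument.

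The paper closes the gap by running the weak code independently $L$ times and extracting from the concatenation $M_s^L$. The $L$-fold i.i.d.\ structure is what converts Shannon entropy into min-entropy: a Hoeffding/typicality argument produces an event $\Bc$ of probability $1-2^{-\Omega(L)}$ on which the pointwise bound $p(m_s^L\mid m_{\Sc\setminus s}^L,y_\a^{nL})\le 2^{-(1-\e_2)L\,H(M_s\mid M_{\Sc\setminus s},Y_\a^n)}$ holds, so that $H_\infty(M_s^L\mid \cdot)\ge (1-o(1))Ln(r_s^{(n)}-\e)$ uniformly in $\a\in\Ac$, after which Lemmas~\ref{lemma:H-infty} and~\ref{lemma:8-MU} apply. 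The repetition in turn forces two further ingredients absent from your sketch: a Slepian--Wolf layer to drive the block error on $M_s^L$ below any prescribed $\e_1$ (the naive union bound gives only $L\e$), and an auxiliary low-rate phase to carry the extractor seeds $V_s$ and Slepian--Wolf indices $O_s$ to every sink in $\Dc_s$, which the paper realizes by treating each $s\to\Dc_s$ link as a broadcast channel of common rate $\approx r_s$. As a minor aside, a $2$-universal family on $\{0,1\}^{nR}$ needs $\Theta(n)$ seed bits, not $O(\log n)$; the extractors of Lemma~\ref{lemma:Vadhan} achieve polylogarithmic seed length, though either way the overhead is rate-negligible.
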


The proof of Theorem \ref{thm:strong_eq_weak} is presented in Appendix A. As done in \cite{MaurerW:00}, to prove this result, we employ   ``extractor functions''. Extractor functions are used in computer science to generate completely random bits with uniform distributions. An extractor function usually has two inputs,  some ``weakly'' random bits (source bits), and  some truly random bits (seed). The extractor function deterministically maps the inputs to an output bit stream that  has  close to uniform distribution. In designing an extractor function, one goal is to make the  length of the seed as short as possible. Extractor functions have been extensively studied in computer science literature. (Refer to \cite{Nisan:96} and \cite{NisanT:99} and the references therein for more information on this subject.)

\begin{definition}[$(\d',\e')$-extractor \cite{MaurerW:00}] Function $E:\{0,1\}^{n_1}\times\{0,1\}^{n_2}\to\{0,1\}^{n_3}$ is called $(\d',\e')$-extractor, if for any random variable $T$ with $\Tc\subseteq\{0,1\}^{n_1}$ and $H_{\infty}(T)\geq \d'n_1$,
\[
d_{\rm TV}\left([V,E(T,V)],W\right)\leq \e',
\]
where $W$ and $V$ are uniformly distributed over $\{0,1\}^{n_2+n_3}$, and$\{0,1\}^{n_2}$, respectively.
\end{definition}

The above definition has a strong requirement that if we concatenate the output $E(T,V)$ and the seed bits $V$ the whole sequence has close to uniform distribution. In the literature, weaker notions, where one is only concerned with the distribution of the output alone, has been studied as well. However, since in this paper, we are going to apply extractor functions to network coding security applications, we use the stronger notion. The seed, in our applications, is a random bit sequence that is sent from the source to all interested destinations, and is potentially observed by all eavesdroppers. Hence, we need to measure security of the code conditioned on knowing the seed.

The following lemma from \cite{Vadhan:98} shows the existence of  efficient extractor functions and gives bounds on parameters $n_2$ and $n_3$ in terms of $n_1$, $\d'$ and $\e'$.

\begin{lemma}\label{lemma:Vadhan}
Given $n_1$, $0<\d'<1$, and $\e'>0$, there exists a $(\d',\e')$-extractor $E:\{0,1\}^{n_1}\times\{0,1\}^{n_2}\to\{0,1\}^{n_3}$, such that
\[
n_2=O((\log(n_1/\e'))^2\log(\d'n_1)),
\]
and
\[
n_3=\d'n_1-2\log(1/\e')-c,
\]
where $c=O(1)$.
\end{lemma}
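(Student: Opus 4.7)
The plan is to prove the lemma by explicit construction, combining the Leftover Hash Lemma with a seed-reduction step via block sources. The backbone is the Leftover Hash Lemma of Impagliazzo, Levin, and Luby: for any $2$-universal family $\mathcal{H} = \{h_s : \{0,1\}^{n_1} \to \{0,1\}^{k}\}$ indexed by $s \in \{0,1\}^{m}$, and any $T$ on $\{0,1\}^{n_1}$ with $H_{\infty}(T) \geq k + 2\log(1/\epsilon')$, the joint distribution of $(S, h_S(T))$ with $S$ uniform on seeds is within total variation distance $\epsilon'$ of the uniform distribution on $\{0,1\}^{m+k}$. Instantiating with a standard $2$-universal family of seed length $m = O(n_1)$ (say, the Toeplitz family) and setting $k = \delta' n_1 - 2\log(1/\epsilon') - O(1)$ already yields a $(\delta', \epsilon')$-extractor with the desired output length $n_3$, but at seed length $O(n_1)$, far larger than the claimed polylogarithmic bound. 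The Leftover Hash Lemma itself I would prove by computing the collision probability of $(S, h_S(T))$ via $2$-universality and invoking the Cauchy--Schwarz bound $d_{\rm TV}(D, U)^2 \leq \tfrac{1}{4}(N \cdot \mathrm{Col}(D) - 1)$ for distributions on a set of size $N = 2^{m+k}$.

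To reduce the seed length to $O((\log(n_1/\epsilon'))^2 \log(\delta' n_1))$, I would next invoke the block-source paradigm of Nisan and Zuckerman. The source $T$ is partitioned into $\ell = O(\log(\delta' n_1))$ consecutive blocks, and by a counting argument on heavy outcomes one shows that, except on an event of probability at most $\epsilon'/(2\ell)$, each block has conditional min-entropy given the earlier blocks that is at least a constant fraction of its length. A short seed of length $O((\log(n_1/\epsilon'))^2)$ is then used together with a short-seed extractor (for instance, one built from random walks on expander graphs or a Trevisan-style construction) to extract randomness from the first block, and the extracted string is recycled as the seed for the next block's extractor while simultaneously contributing to the final output. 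Only the initial seed $V$ needs to be supplied externally, so the total fresh randomness fed to $E$ matches the stated bound, and the concatenated outputs sum to $n_3$ bits.

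The main obstacle will be to preserve the \emph{strong} extractor property $d_{\rm TV}([V, E(T,V)], W) \leq \epsilon'$ through this recursion, which is the nontrivial content of the lemma. This requires a min-entropy chain rule of the form: if $H_{\infty}(T) \geq \delta' n_1$ and a string of length at most $k$ is released, then except with probability $\epsilon'/\ell$ over the released value the residual min-entropy of $T$ drops by at most $k + \log(\ell/\epsilon')$. Chaining this bound across the $\ell$ levels, summing the per-level total variation errors to a total of at most $\epsilon'$, and absorbing the $O(1)$ per-level rounding losses into the constant $c$ then yields simultaneously the claimed seed length $n_2 = O((\log(n_1/\epsilon'))^2 \log(\delta' n_1))$ and the output length $n_3 = \delta' n_1 - 2\log(1/\epsilon') - c$.
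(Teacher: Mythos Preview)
The paper does not prove this lemma; it is quoted verbatim from \cite{Vadhan:98} and used as a black box, so there is no argument in the paper to compare against. Your proposal is an attempt to reconstruct the extractor construction from first principles.

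Your Leftover Hash Lemma step is correct and standard, but the seed-reduction step contains a genuine gap. You assert that after partitioning $T$ into $\ell$ \emph{consecutive} blocks, ``except on an event of probability at most $\epsilon'/(2\ell)$, each block has conditional min-entropy given the earlier blocks that is at least a constant fraction of its length.'' This is false for general min-entropy sources: take $T=(0^{n_1/2},T')$ with $T'$ uniform on $\{0,1\}^{n_1/2}$, so that $H_{\infty}(T)=n_1/2$ and $\delta'=1/2$; every block contained in the first half has conditional min-entropy zero with probability one, not with small probability. A source with min-entropy $\delta' n_1$ need not be close to a block source under any fixed coordinate partition. The Nisan--Zuckerman conversion you allude to requires \emph{randomly sampled} coordinate subsets (which themselves cost seed), and even then it typically extracts only a constant fraction of the min-entropy per round, not the essentially full $\delta' n_1 - 2\log(1/\epsilon') - O(1)$ bits claimed here. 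Achieving polylogarithmic seed together with output length within an additive $O(\log(1/\epsilon'))$ of the min-entropy is precisely the nontrivial content of the cited construction (and of the subsequent Trevisan line), and it does not follow from the recursion you outline. If you want a self-contained proof you would have to reproduce one of those constructions in full; otherwise, citing the result as the paper does is the appropriate course.
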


Clearly, for every set $\Ac_s\subset2^{\Ec}$, $\Rc_0^{(s)}\subseteq\Rc_0^{(w)}$ and $\Rc_{\e}^{(s)}\subseteq\Rc_{\e}^{(w)}$. We prove that, in case of asymptotically zero probability of error,  the other direction is true  as well, \ie  $\Rc_{\e}^{(w)}=\Rc_{\e}^{(s)}$. A similar result is proved in \cite{MaurerW:00} for wiretap channels. There, the authors show that moving from weak secrecy to strong secrecy does not change the secrecy capacity of wiretap channels. Here, we extend that result to wiretap networks. The main ingredient of our proof in  the following two lemmas from \cite{MaurerW:00}.

\begin{lemma}[Lemma 9 in \cite{MaurerW:00}]\label{lemma:8-MU}
Let $\d',\d_1,\d_2>0$. For any sufficiently large  $n_1$, there exists a function $E: \{0,1\}^{n_1}\times \{0,1\}^{n_2} \to \{0,1\}^{n_3}$, where $n_2\leq \d_1n_1$ and $n_3\geq (\d'-\d_2)n_1$, such that for every random variable  $T$ with $\Tc\subseteq\{0,1\}^{n_1}$ and $H_{\infty}(T)\geq \d'n_1$,
\[
H(E(T,V)|V)\geq n_3-2^{-\sqrt{n_1}-\d_0},
\]
where $\d_0=o(1)$, and $V$  is independent of $T$  uniformly distributed on $\{0,1\}^{n_2}$.
\end{lemma}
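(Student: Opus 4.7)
The plan is to deduce Lemma \ref{lemma:8-MU} by combining the efficient extractor from Lemma \ref{lemma:Vadhan} with a continuity-of-entropy inequality of Fannes type, which converts total-variation closeness to uniform into a near-maximal conditional entropy.

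First, I would invoke Lemma \ref{lemma:Vadhan} with accuracy parameter $\e'$ chosen to be roughly $2^{-\sqrt{n_1}}$ (possibly damped by a polynomial factor in $n_1$) to obtain a $(\delta', \e')$-extractor $E:\{0,1\}^{n_1}\times\{0,1\}^{n_2}\to\{0,1\}^{n_3}$. With this choice, the seed length $n_2 = O((\log(n_1/\e'))^2\log(\delta' n_1))$ and the output length $n_3 = \delta' n_1 - 2\log(1/\e') - O(1)$ promised by Lemma \ref{lemma:Vadhan} must be matched against the targets $n_2 \le \delta_1 n_1$ and $n_3 \ge (\delta'-\delta_2) n_1$; verifying that both can be met simultaneously for $n_1$ sufficiently large is the first check.

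Second, I would translate the extractor's total-variation guarantee into a conditional-entropy lower bound. By the $(\delta', \e')$-extractor property applied to $T$ with $H_\infty(T) \ge \delta' n_1$, one has $d_{\rm TV}([V, E(T,V)], W) \le \e'$, where $W$ is uniform on $\{0,1\}^{n_2+n_3}$. I would then apply the standard Fannes continuity inequality, $|H(P) - H(Q)| \le \e \log M + h(\e)$ for distributions on an alphabet of cardinality $M$ at total-variation distance $\e$ (with $h$ the binary entropy function), taking $M = 2^{n_2+n_3}$, to obtain
\[
H(V, E(T,V)) \;\ge\; (n_2+n_3) - \e'(n_2+n_3) - h(\e').
\]
Since $V$ is uniform on $\{0,1\}^{n_2}$ and independent of $T$, the chain rule gives $H(V, E(T,V)) = n_2 + H(E(T,V)\mid V)$, so subtracting $n_2$ from both sides yields
\[
H(E(T,V)\mid V) \;\ge\; n_3 - \e'(n_2+n_3) - h(\e').
\]

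Plugging in $\e' \approx 2^{-\sqrt{n_1}}$ and using $n_2+n_3 \le n_1$, the defect is bounded by $n_1 \cdot 2^{-\sqrt{n_1}} + h(2^{-\sqrt{n_1}}) = 2^{-\sqrt{n_1}(1-o(1))}$, which is of the form $2^{-\sqrt{n_1}-\delta_0}$ with $\delta_0 = o(1)$, as required. The main obstacle is the joint parameter-balancing step: $\e'$ must be small enough that the Fannes defect decays at the prescribed exponential rate, yet not so small that the polylogarithmic seed length in Lemma \ref{lemma:Vadhan} overshoots $\delta_1 n_1$. The precise exponent $\sqrt{n_1}$ in the conclusion is dictated by this compromise between seed length and entropy defect, and the argument is entirely quantitative once the extractor has been produced.
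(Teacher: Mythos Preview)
Your approach---instantiate the Vadhan extractor of Lemma~\ref{lemma:Vadhan} and then convert the total-variation guarantee into an entropy deficit via the Fannes inequality---is exactly what the paper indicates (it says only that the lemma ``follows from Lemma~\ref{lemma:Vadhan}'') and is the standard route in Maurer--Wolf.

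There is, however, a quantitative slip in your parameter choice that you should be aware of. With $\e'\approx 2^{-\sqrt{n_1}}$, the seed length promised by Lemma~\ref{lemma:Vadhan} is
\[
n_2 \;=\; O\bigl((\log(n_1/\e'))^2\log(\delta' n_1)\bigr)\;=\;O\bigl((\sqrt{n_1}+\log n_1)^2\log n_1\bigr)\;=\;O(n_1\log n_1),
\]
which is superlinear in $n_1$ and therefore cannot satisfy $n_2\le \delta_1 n_1$ for large $n_1$. So the first check you set yourself actually fails with this $\e'$. Relatedly, the last line of your argument conflates two different orders: $n_1\cdot 2^{-\sqrt{n_1}} = 2^{-\sqrt{n_1}+\log n_1}$ is $2^{-\sqrt{n_1}(1-o(1))}$, but that is \emph{not} the same as $2^{-\sqrt{n_1}-\delta_0}$ with $\delta_0=o(1)$; the latter is asymptotically $2^{-\sqrt{n_1}}$, a factor of $n_1$ smaller.

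To make the seed bound $n_2\le\delta_1 n_1$ go through with the extractor of Lemma~\ref{lemma:Vadhan} as stated, you would need $\log(1/\e')=O(\sqrt{n_1/\log n_1})$, which yields a slightly weaker entropy deficit than the stated $2^{-\sqrt{n_1}-o(1)}$. Since the paper gives no details beyond the pointer to Lemma~\ref{lemma:Vadhan}, this tension is not resolved there either; the precise exponent $\sqrt{n_1}$ is inherited from \cite{MaurerW:00} and may rely on a sharper extractor than the one quoted. Your high-level plan is correct; just be careful not to claim that $\e'=2^{-\sqrt{n_1}}$ simultaneously meets both constraints.
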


The proof of Lemma \ref{lemma:8-MU}, follows from Lemma \ref{lemma:Vadhan}.

\begin{lemma}[Lemma 10 in \cite{MaurerW:00}]\label{lemma:H-infty}
Let $X$ and $Y$ be two finite-alphabet random variables. For any $\l>0$,
\begin{align}
\P\left(H_{\infty}(X)-H_{\infty}(X|Y=y)\leq \log|\Yc|+\l \right)\geq 1-2^{-\l}.\label{eq:lemma:H-infty}
\end{align}
\end{lemma}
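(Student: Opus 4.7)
The plan is to reformulate the probability in terms of the raw maximum probabilities $p_{\max} \triangleq \max_x p(x)$ and $p_{\max}(y) \triangleq \max_x p(x|y)$, so that $H_\infty(X) = -\log p_{\max}$ and $H_\infty(X|Y=y) = -\log p_{\max}(y)$. With this substitution, the bad event
\[
B \triangleq \{y : H_\infty(X) - H_\infty(X|Y=y) > \log|\Yc| + \lambda\}
\]
is exactly $B = \{y : p_{\max}(y) > |\Yc|\, 2^{\lambda}\, p_{\max}\}$. Our goal reduces to showing $\P(Y \in B) \leq 2^{-\lambda}$.

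The key step is a pointwise bound on $p(y)$ for $y \in B$. For each $y \in B$, pick $x_y \in \argmax_x p(x|y)$, so that $p(x_y \mid y) = p_{\max}(y)$. Using the trivial inequality $p(x_y) \geq p(x_y, y) = p(x_y \mid y) \, p(y)$ together with $p(x_y) \leq p_{\max}$, I get
\[
p_{\max} \;\geq\; p(x_y \mid y)\, p(y) \;>\; |\Yc|\, 2^{\lambda}\, p_{\max}\, p(y),
\]
which forces $p(y) < \frac{1}{|\Yc|\, 2^{\lambda}}$. Summing this bound over the at most $|\Yc|$ elements of $B$ yields $\P(Y \in B) < |\Yc| \cdot \frac{1}{|\Yc|\, 2^{\lambda}} = 2^{-\lambda}$, which is the required inequality (with a minor convention that $y$'s of probability zero may be excluded from $B$ since they contribute nothing and leave $p_{\max}(y)$ undefined).

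There is essentially no main obstacle here; the lemma follows from a one-line rearrangement of definitions plus the standard "marginal dominates joint" inequality $p(x) \geq p(x,y)$. The only point that requires a bit of care is that the claim is quantified over \emph{realizations} $y$ rather than over a random quantity involving $X$, so the argument must be carried out as a pointwise bound on $p(y)$ followed by a union bound over the support of $Y$ — there is no need for any concentration inequality or auxiliary random variable.
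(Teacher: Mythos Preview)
Your proof is correct. Both your argument and the paper's rest on the same two elementary inequalities, namely $p(x_y)\le p_{\max}$ and $p(y\mid x_y)\le 1$, which combine to give $p(y)\,p_{\max}(y)\le p_{\max}$ for every $y$. The only structural difference is in how this pointwise estimate is packaged: the paper applies Markov's inequality to the random variable $p_{\max}(Y)$ and then bounds $\E[p_{\max}(Y)]=\sum_y p(y)p_{\max}(y)\le p_{\max}|\Yc|$, whereas you rearrange the same inequality directly into $p(y)<(|\Yc|2^{\lambda})^{-1}$ for each bad $y$ and then sum over at most $|\Yc|$ such $y$'s. Your route is slightly more transparent since it bypasses Markov entirely, but the two arguments are otherwise identical in content.
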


\begin{proof}
Since the proof of Lemma  \ref{lemma:H-infty} is omitted in \cite{MaurerW:00}, for completeness, we present the proof here. Note that the probability in \eqref{eq:lemma:H-infty} is  only in terms of $p(y)$, not $p(x)$. Let $x_m\triangleq \argmax_{x}p(x)$, and $x_m(y)\triangleq \argmax_xp(x|y)$.  From these definitions, we can rewrite the complement of the event mentioned in \eqref{eq:lemma:H-infty} as follows
\begin{align*}
&\P\left(H_{\infty}(X)-H_{\infty}(X|Y=y)> \log|\Yc|+\l \right)\nonumber\\
&\;=\P\left(p(x_m(Y)|Y)>p(x_m)2^{\l}|\Yc|\right).
\end{align*}
 By Markov inequality,
\begin{align}
\P&\Big(p(x_m(Y)|Y)>p(x_m)2^{\l}|\Yc|\Big)\nonumber\\
& < {\E[p(x_m(Y)|Y)]\over p(x_m)2^{\l}|\Yc|}\nonumber\\
&={ \sum_{y\in\Yc}p(y)p(x_m(y)|y)\over p(x_m)2^{\l}|\Yc|}\nonumber\\
&={ \sum_{y\in\Yc}p(x_m(y))p(y|x_m(y))\over p(x_m)2^{\l}|\Yc|}\nonumber\\
&\leq { p(x_m) \sum_{y\in\Yc}p(y|x_m(y))\over p(x_m)2^{\l}|\Yc|}\nonumber\\
&\leq { p(x_m) \sum_{y\in\Yc}1\over p(x_m)2^{\l}|\Yc|}\nonumber\\
&\leq 2^{-\l},
\end{align}
which concludes the proof.
\end{proof}

%

\section{Noiseless wiretap networks}\label{sec:wiretap}

Consider error-free communication network $\Nc$  described by acyclic directed graph $G=(\Vc,\Ec)$ such that $\Vc$ and $\Ec$ denote the set of nodes and edges in $G$, respectively.  Each node $v\in\Vc$ denotes a user in the network and each edge $e=(v_1,v_2)\in\Ec$ represents an error-free bit pipe of  finite capacity $c_e$. For each node $v\in\Vc$, let $\In(v)$ and $\Out(v)$ denote the set of  incoming and outgoing edges of node $v$, respectively. That is, $\In(v)\triangleq\{(v_1,v)\;:\:(v_1,v)\in\Ec \}$ and $\Out(v)\triangleq\{(v,v_2)\;:\:(v,v_2)\in\Ec \}$.

Let  $\Sc\subset\Vc$ and $\Tc\subset\Vc$ denote the set of source nodes and sink nodes, respectively. Without loss of generality, we  assume that:
\begin{itemize}
\item[i)] $\Sc\cap\Tc=\emptyset$.
\item[ii)] Source nodes do not have incoming edges. That is,  if $s\in\Sc$, $\In(s)=\emptyset$.
\item[iii)]  Terminal nodes do not have outgoing edges. That is,  if $t\in\Tc$, $\Out(t)=\emptyset$.
\item[iv)] Only source nodes generate keys\footnote{Note that this assumption is not restrictive.  If  node $v\in\Vc\backslash(\Vc\cup \Sc)$ desires to generate a key, we add a source node $s'$ to $\Sc$ and a directed edge $e'=(s',v)$ to $\Ec$ connecting  $s'$ to $v$, with large enough capacity $c_{e'}$. Source $s'$ does not have any message, \ie $\Mc_{s'}=\emptyset$, but it generates key $K_{s'}$, and sends it directly to $v'$ through $e'$.}.
\end{itemize}

A  secure network code of blocklength $n$ operating on network $\Nc$ is defined  as follows. Each source $s\in\Sc$ observes  message $M_s$ and  key $K_s$  uniformly distributed over
\[
\Mc_{s}\triangleq[1:2^{\lfloor n r_{m_s}\rfloor}],
\]
and
\[
\Kc_s\triangleq[1:2^{\lfloor n r_{k_s}\rfloor}],
\]
 respectively. Here, $r_{m_s}$ and $r_{k_s}$ represent the information rate and the key rate of source $s$, respectively.

 Let $W_e\in\Wc_e$ denote the message traversing edge $e\in\Ec$. While  most of the network coding literature is dedicated to fixed-length  coding across channels, in this paper, we consider both variable-length and fixed-length channel coding models, described as follows.
 \begin{itemize}
 \item[i)] Fixed-length channel coding: In the fixed-length model,   a bit pipe of capacity $c_e$ is defined  to be an error-free point-to-point channel, which carries at most $nc_e$ bits in $n$ transmissions. Therefore, in this case the rate of the code used on  channel $e$ is defined as
\begin{align}\label{ref:cap-const}
r_e^{(f)}\triangleq n^{-1} \log|\Wc_e|,
 \end{align}
 or equivalently we can assume that\footnote{Throughout the paper the basis of logarithm is assumed to be 2.} $\Wc_e=[1:2^{nr_e^{(f)}}]$.

  \item[ii)] Variable-length channel coding:  In the variable-length coding regime, for a blocklength of $n$,  the number of bits sent over channel $e$ with capacity $c_e$ is allowed to exceed $nc_e$. Therefore, in the case of variable-length coding, the rate of the  code, $r_e^{(v)}$, is measured as the average \emph{expected} number of bits per channel use, or, more precisely,
\begin{align}\label{ref:cap-const}
r_e^{(v)}\triangleq n^{-1} H(W_e).
 \end{align}
From the definition of $r_e^{(v)}$,  unlike the case of fixed-length coding, in this case $n^{-1}\log |\Wc_e|$ can be much larger than $c_e$.

 \end{itemize}

As mentioned earlier, most of the literature on analyzing  network coding capacity regions is dedicated to the case of asymptotically zero probability of error where the channel code used on each point-point  channel is a fixed-length code. The reason in this paper we consider both fixed and variable length coding is that, here we focus on both asymptotically zero and perfectly zero probability of error. While fixed-length coding is a  proper model for analyzing the capacity region of the former case, in the latter one, in order to avoid  shrinking  the achievable capacity region by a significant amount, we move to the variable-length coding paradigm.   The advantage of this shift is that when using typical sets as the building blocks,  variable-lengths codes allow us to once in a while send a non-typical sequence,  without hurting the performance significantly.

 In \cite{SongY:03}, where the authors derive inner and outer bounds on the zero-error capacity region of acyclic networks (with no security constraint), there too the authors consider variable-length coding across the channels. However, at this point it is not clear to us, whether this shift of models is essential to the problem, or  is  a technical issue that can be fixed later.

For each edge $e=(v_1,v_2)\in\Ec$, the function $\eta_{e}$ denotes the encoding operation performed by node $v_1$ to generate the message sent over edge $e$. If $e\in\Out(s)$, for some $s\in\Sc$, then
\[
\eta_e:\;\Mc_{s}\times\Kc_s\;\to\; \Wc_e,
\]

and
\[
W_e=\eta_e(M_s,K_s).
\]  If $e\in\Out(v)$, where $v\in\Vc\backslash(\Sc\cup\Tc)$, then
\[
\eta_e:\; \prod_{e'\in\In(v)}\Wc_{e'}\;\to\; \Wc_e,
\]
and $W_e=\eta_e(W_{e'}: \;e'\in\In(v))$. Each terminal node $t\in\Tc$ reconstructs a subset $\b(t)\subseteq \Sc$ of source messages. For $s\in\b(t)$, let $\eta_{t,s}$ denote the decoding function applied at terminal node $t$ for decoding source $s$. More precisely,
\[
\eta_{t,s}:\;  \prod_{e\in\In(t)}\Wc_e \;\to\; \Mc_s,
\]
and
\[
\hat{M}_{s\to t}=\eta_{t,s}(W_e:\; e\in\In(t)),
\]
where $\hat{M}_{s\to t}$ represents the reconstruction of message $M_s$ at node $t$.

Let $\Ac_s\subset2^{\Ec}$ denote the set of security constraints described as follows.  Each member $\a\in\Ac_s$ represents a subset of edges $\Ec$. An eavesdropper can choose a   set $\a\in\Ac_s$, and access the information transmitted on all the edges  $e\in\a$.  Our goal is to prevent the leakage of information to the  eavesdropper such that  asymptotically it does not get any meaningful information about the source messages. In the following,  this notion is defined more precisely.

\begin{enumerate}
\item Zero-error capacity region: Rate  $\rv=(r_s:s\in\Sc)$ is said to be achievable  subject to weak secrecy with respect to the set $\Ac_s\subset2^{\Ec}$ at zero error rate,  if for any $\e>0$, there exists  a code of blocklength $n$ large enough, message rates $\rv_m^{(n)}=(r^{(n)}_{m_s}:s\in\Sc)$ and key rates  $\rv^{(n)}_k=(r^{(n)}_{k_s}:s\in\Sc)$, such that
\begin{enumerate}
\item[i.] For every $s\in\Sc$,
\[r_{m_s}^{(n)} \geq r_s-\e.\]
\item[ii.] For every $e\in\Ec$, \[r^{(v)}_e\leq c_e.\]
\item[iii.]  For every $\a\in\Ac_s$,
\[n^{-1} I(M_{\Sc};W_{\a}) \leq \e,\]
where $M_{\Sc}=(M_s:\; s\in\Sc)$ and $W_{\a}=(W_e:\:e\in\a)$.
\item[iv.] For every $t\in\Tc$, and $s\in\b(t)$,
\[
\P(\hat{M}_{s\to t}\neq M_{s}) = 0.
\]
 \end{enumerate}
 Rate   $\rv=(r_s:s\in\Sc)$ is said to be  achievable  subject to strong secrecy with respect to the set $\Ac_s\subset2^{\Ec}$ at zero error rate, if it satisfies identical conditions as those required for weakly-secure achievable rates, except for condition (iii), where in this case we require
\[ I(M_{\Sc};W_{\a}) \leq \e,\]
for every $\a\in\Ac_s$.

Let $\Rc^{(w)}_0$ denote the set of all achievable weakly-secure rates with zero error. Similarly, let $\Rc^{(s)}_0$ denote the set of all achievable rates at zero error subject to strong security.

\item Asymptotically zero-error capacity region: Rate  $\rv=(r_s:s\in\Sc)$ is said to be  achievable  at asymptotically zero error subject to weak secrecy with respect to the set $\Ac_s\subset2^{\Ec}$,  if for any $\e>0$, there exists  a code with blocklength $n$ large enough with message rates $\rv_m^{(n)}=(r^{(n)}_{m_s}:s\in\Sc)$  key rates  $\rv^{(n)}_k=(r^{(n)}_{k_s}:s\in\Sc)$, such that
\begin{enumerate}
\item[i.] For every $s\in\Sc$,
\[r_{m_s}^{(n)} \geq r_s-\e.\]
\item[ii.] For every $e\in\Ec$, \[r^{(f)}_e\leq c_e.\]
\item[iii.]  For every $\a\in\Ac_s$,
\[n^{-1} I(M_{\Sc};W_{\a}) \leq \e,\]
where $M_{\Sc}=(M_s:\; s\in\Sc)$ and $W_{\a}=(W_e:\:e\in\a)$.
\item[iv.] For every $t\in\Tc$, and $s\in\b(t)$,
\[
\P(\hat{M}_{s\to t}\neq M_{s}) \leq \e.
\]
 \end{enumerate}

 Again, rate  $\rv=(r_s:s\in\Sc)$ is said to be  achievable  subject to strong secrecy with respect to the set $\Ac_s\subset2^{\Ec}$ at asymptotically zero error rate, if all the above conditions hold, except for condition (iii), which is strengthened in this case as
\[I(M_{\Sc};W_{\a}) \leq \e,\]
for every $\a\in\Ac_s$.

Let   $\Rc_{\e}^{(w)}$ and $\Rc_{\e}^{(s)}$ denote the set of all weakly-secure and strongly-secure achievable rates at asymptotically zero error, respectively. \end{enumerate}

\begin{remark}
We define the capacity constraint as $r^{(v)}_e\leq c_e$ and $r^{(f)}_e\leq c_e$, in the case of zero error and asymptotically zero error, respectively.  This is stronger than the conditions  usually considered in the network coding literature, \ie $r^{(v)}_e\leq c_e+\e$ or $r^{(f)}_e\leq c_e+\e$, where $\e$ can be made arbitrarily small \cite{yeung}. We believe that the stricter assumptions are more natural, because noiseless links of finite capacities are commonly used to model noisy channels of equal capacities \cite{KoetterE:11,JalaliE:10,JalaliE:11}. However, a noisy channel of capacity $c_e$, at blocklength $n$, can  at most carry $nc_e$ bits with arbitrarily  small probability of error, which is equivalent to requiring $r^{(v)}_e\leq c_e$ and $r^{(f)}_e\leq c_e$.
\end{remark}



Note that  noiseless wiretap networks are  almost a  special case of  DMNs. The only  difference between the two models  is that in DMNs we assume that the adversary has access to the outputs of some nodes in the network, while in wiretap networks, the eavesdropper listens to some links. However, the proof of Theorem \ref{thm:strong_eq_weak} can readily  be extended    to noiseless  wiretap networks as well.   The result is summarized in Corollary \ref{cor:strong-weak-epsilon} of Theorem \ref{thm:strong_eq_weak}:

\begin{corollary}\label{cor:strong-weak-epsilon}
For any wiretap network $\Nc$ described by graph $G=(\Vc,\Ec)$, set of sources $\Sc$, set of terminals $\Tc$, $(\b(t): t\in\Tc)$, and    collection of  potential sets of links $\Ac_s\subset2^{\Ec}$,
\[
\Rc_{\e}^{(w)}=\Rc_{\e}^{(s)}.
\]
\end{corollary}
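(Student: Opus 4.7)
The inclusion $\Rc_\e^{(s)} \subseteq \Rc_\e^{(w)}$ is immediate from the definitions, so the real work is showing $\Rc_\e^{(w)} \subseteq \Rc_\e^{(s)}$. My plan is to mirror the proof of Theorem \ref{thm:strong_eq_weak} almost verbatim: the only structural change is that the adversary now observes the discrete edge symbols $W_\a$ for $\a \in \Ac_s$ instead of the vector of channel outputs $Y_\a^n$, but both are finite-alphabet random variables for which the weak-secrecy hypothesis is a bound on their mutual information with $M_\Sc$, so the extractor-based privacy amplification machinery of Lemmas \ref{lemma:Vadhan}--\ref{lemma:H-infty} applies with only cosmetic changes.

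Concretely, given $\rv \in \Rc_\e^{(w)}$, I would start with a sequence of weakly secure fixed-length codes $\Cc_n$ at message rates $r_{m_s}^{(n)} \geq r_s - \e/2$ satisfying $n^{-1} I(M_\Sc; W_\a) \leq \e_n \to 0$ for every $\a \in \Ac_s$, and construct a strongly secure code $\tilde{\Cc}_n$ at rates $r_s - \e$ in three steps. First, for each source $s$, draw a fresh public seed $V_s$ of length $n_{2,s} = O((\log n)^3)$ from Lemma \ref{lemma:Vadhan}, and route $V_s$ through the network by piggybacking on the edges already carrying $M_s$; since $n_{2,s} = o(n)$ and $|\Sc|, |\Ec|$ are finite, the extra bits fit within each fixed-length edge capacity $nc_e$ for $n$ large. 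Second, select an extractor $E_s$ from Lemma \ref{lemma:8-MU} with output length $n_{3,s} \geq n(r_{m_s}^{(n)} - \d)$ for some small $\d > 0$. Third, redefine the encoding map: to transmit a strongly-secure message $\Mt_s$, the encoder samples $M_s$ uniformly from $E_s^{-1}(\Mt_s, V_s)$ and then runs $\Cc_n$ on $(M_s)_{s \in \Sc}$. Each terminal decodes $M_s$ reliably and knows $V_s$, hence recovers $\Mt_s = E_s(M_s, V_s)$.

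For the secrecy guarantee, the weak-secrecy bound $n^{-1} I(M_\Sc; W_\a) \leq \e_n$ gives $H(M_\Sc \mid W_\a) \geq nR^{(n)} - n\e_n$ with $R^{(n)} = \sum_s r_{m_s}^{(n)}$; together with Lemma \ref{lemma:H-infty} (applied after a preliminary Shannon-to-min-entropy reduction) this yields $H_\infty(M_\Sc \mid W_\a = w_\a) \geq \d' n R^{(n)}$ with high probability, for some $\d' > 0$. Lemma \ref{lemma:8-MU} then gives $H(\Mt_\Sc \mid V_\Sc, W_\a) \geq \sum_s n_{3,s} - 2^{-\sqrt{n}-o(1)}$, which is equivalent to $I(\Mt_\Sc; V_\Sc, W_\a) \leq 2^{-\sqrt{n}+O(1)} \leq \e$ for $n$ large. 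A union bound over the finite collection $\Ac_s$ completes the argument.

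The main obstacle I anticipate is the Shannon-to-min-entropy step. A naive application of Lemma \ref{lemma:H-infty} with $Y = W_\a$ is useless because $\log|\Wc_\a| \leq n \sum_{e \in \a} c_e$ grows linearly in $n$ and dwarfs $H_\infty(M_\Sc) = nR^{(n)}$. The remedy, already used in Theorem \ref{thm:strong_eq_weak}, is to replace $W_\a$ with a compressed effective side information whose alphabet size is controlled by $I(M_\Sc; W_\a) \leq n\e_n = o(n)$ rather than by $\log|\Wc_\a|$ itself, for instance by restricting to a high-probability jointly typical set. This reduction depends only on the joint distribution of $(M_\Sc, W_\a)$ as abstract finite-alphabet random variables, which is precisely why the argument of Theorem \ref{thm:strong_eq_weak} transfers to the wiretap-network setting with no essential modifications.
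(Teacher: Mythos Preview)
Your high-level plan---extractor-based privacy amplification, following Theorem~\ref{thm:strong_eq_weak}---is the right one, and the paper derives the corollary exactly this way. But your concrete three-step construction omits the ingredient that makes the Shannon-to-min-entropy conversion work: the $L$-fold repetition of the base code. In the paper's proof of Theorem~\ref{thm:strong_eq_weak}, the weakly secure code of blocklength $n$ is applied $L$ times, producing i.i.d.\ copies $(M_{\Sc,1}, Y_{\a,1}^n), \ldots, (M_{\Sc,L}, Y_{\a,L}^n)$; the typicality event $\Bc$ is defined over these $L$ copies, and it is only on this event that one obtains the pointwise bound $p(m_s^L \mid m_{\Sc\backslash s}^L, y_\a^{nL}) \leq 2^{-L(1-\e_2)H(M_s \mid M_{\Sc\backslash s}, Y_\a^n)}$ needed for the min-entropy lower bound. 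Without repetition there is no i.i.d.\ structure to which a typicality argument can be applied: the joint law of $(M_\Sc, W_\a)$ from a single run of $\Cc_n$ is one fixed distribution on a product of exponentially large alphabets, and a bound on Shannon conditional entropy alone does not control the conditional min-entropy. For a concrete obstruction, take $M$ uniform on $\{0,1\}^n$ and let $W = M$ with probability $\e_n$, $W = \perp$ otherwise; then $n^{-1} I(M;W) = \e_n \to 0$, yet any extractor with public seed $V$ gives $I(E(M,V); W, V) \geq \e_n \cdot n_3 = \Theta(\e_n n)$, which need not be $o(1)$. Your last paragraph correctly flags the obstacle but misidentifies the remedy: ``restricting to a jointly typical set'' is meaningful only after the $L$-fold repetition has created the i.i.d.\ structure.

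Two further points. First, once you repeat $L$ times the error probability degrades to $L\e$, so the paper inserts a Slepian--Wolf layer (the messages $O_s$, of total rate $\leq L(nr_s^{(n)}\e+1)$) to restore reliability; your sketch has no analogue of this, and Lemma~\ref{lemma:H-infty} is invoked in the paper precisely to absorb the leakage of $O_\Sc$ to the adversary. Second, the paper sends the seeds $V_\Sc$ and the Slepian--Wolf indices $O_\Sc$ by re-using the weakly secure code as a common-message broadcast channel from each source to its sinks (an additional $|\Sc|$ rounds at negligible rate), rather than by ``piggybacking'' inside the existing blocklength; in the fixed-length model the edge alphabets are already saturated at $|\Wc_e| = 2^{n c_e}$, so there is no slack to carry extra bits without lengthening the block.
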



\section{Capacity region: zero-error coding}\label{sec:c-region-0}

In this section we derive inner and outer bounds on the secure  network coding capacity region subject to weak secrecy constraint and zero probability of error. As a reminder,  in the case of zero-error communication, we  consider  variable-length coding across the links.

Consider a set of finite-alphabet random variables $((U_{m_s},U_{k_s})_{s\in\Sc},(U_e)_{e\in\Ec})$ corresponding to the source messages, source keys and codewords sent over the edges. Define $N\triangleq 2|\Sc|+|\Ec|$  to denote the total number of  such random variables in the network. As defined earlier, $\Gamma_{N}^*$ denotes the entropic region defined by all possible set of  $N$ finite-alphabet random variables.

Define sets $\Gamma_1,\ldots,\Gamma_6$ as follows:
\begin{align}
\Gamma_1&\triangleq \big\{\hv\in\Hc_{N}:\; h_{\Mc_s\cup\Kc_s}=\sum_{s\in\Sc}(h_{m_s}+h_{k_s})\big\},\\
\Gamma_2&\triangleq \big\{\hv\in\Hc_{N}:\; h_{m_s,k_s,\Out(s)}=h_{m_s,k_s}, \;\forall \;s\in\Sc \},\\
\Gamma_3&\triangleq\big \{\hv\in\Hc_{N}:\; h_{\In(v)\cup\Out(v)}=h_{\In(v)}, \;\forall\;v\in\Vc\backslash\Sc\cup\Tc \},\\
\Gamma_4&\triangleq \big\{\hv\in\Hc_{N}:\; h_{m_{\b(t)},\In(t)}= h_{\In(t)}, \;\forall\;t\in\Tc \},\\
\Gamma_5&\triangleq \big\{\hv\in\Hc_{N}:\; h_{e}\leq c_e, \;\forall\; e\in\Ec \},\\
\Gamma_6&\triangleq \big\{\hv\in\Hc_{N}:\; h_{m_{\Sc},\a}= h_{m_{\Sc}}+h_{\a}, \; \forall\a\in\Ac_s \}.
\end{align}

For $\b\subseteq[1:6]$, let
\[
\Gamma_{\b}\triangleq \bigcap_{i\in\b} \Gamma_i.
\]
Also, for notational simplicity, sometimes instead of $\Gamma_{\{i,j\}}$, we write $\Gamma_{ij}$.

Define the projection operator ${\rm Proj}_{\Sc}(\cdot)$ over the set  $\Hc_{N}$ as follows: for any vector $\hv\in\Hc_N$,  $\bf{f}={\rm Proj}_{\Sc}(\hv)\in\mathds{R}^{|\Sc|}$ and $f_{s}=h_s$, for all $s\in\Sc$. In other words, ${\rm Proj}_{\Sc}(\cdot)$ projects a vector in $\Hc_N$ onto the coordinates corresponding to the source messages.

\begin{theorem} [Inner bound]\label{thm:inner_bd-variable}
Define
\[
\Rc_{0,i}\triangleq \Lambda({\rm Proj}_{\Sc}(\overline{\rm con}(\Gamma_{N}^{*}\cap \Gamma_{12346})\cap\Gamma_5)).
\]
Then,
$\Rc_{0,i}\subseteq \Rc_0^{(w)}.$
\end{theorem}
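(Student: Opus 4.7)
The plan is to achieve any rate vector $\rv \in \Rc_{0,i}$ by explicit block coding. I first dispose of the outer operations in the definition of $\Rc_{0,i}$. The operator $\Lambda$ is preserved trivially by padding messages with zeros. The convex closure $\overline{\rm con}$ is handled by time-sharing: a convex combination $\sum_i \lambda_i \hv^{(i)}$ of entropic vectors in $\Gamma_{N}^*\cap\Gamma_{12346}$ is realized by partitioning the blocklength $n$ into sub-blocks of sizes $\lfloor n\lambda_i\rfloor$ and running the code for $\hv^{(i)}$ on the $i$th sub-block; the averaged per-edge rate is $\sum_i \lambda_i h_e^{(i)} \le c_e$ precisely when the convex combination lies in $\Gamma_5$. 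Variable-length coding is essential here, as individual sub-blocks may transiently exceed $c_e$. Limit points of the closure are absorbed by a standard diagonal $\e$-approximation. It therefore suffices to prove the core statement: for every $\hv \in \Gamma_N^* \cap \Gamma_{12346}$ with realizing random variables $(U_{m_s}, U_{k_s}, U_e)$, the projected rate ${\rm Proj}_\Sc(\hv)$ is asymptotically achievable with roughly $n(h_e + o(1))$ bits on edge $e$.

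Given such a realization, $\Gamma_{2,3}$ provide deterministic single-letter functions expressing each outgoing edge variable at a node in terms of the incoming ones, and $\Gamma_4$ provides single-letter decoders at terminals. For blocklength $n$ and small $\delta > 0$, I fix injective maps $\phi_{m_s} : \Mc_s \hookrightarrow \Tc_\e^{(n)}(U_{m_s})$ and $\phi_{k_s} : \Kc_s \hookrightarrow \Tc_\e^{(n)}(U_{k_s})$ with $|\Mc_s| \le \lfloor 2^{n(h_{m_s} - \delta)}\rfloor$ and $|\Kc_s| \le \lfloor 2^{n(h_{k_s} - \delta)}\rfloor$; such maps exist by (\ref{eq:size-typical-set}). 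Each source sets $X_{m_s}^n = \phi_{m_s}(M_s)$ and $X_{k_s}^n = \phi_{k_s}(K_s)$, and all downstream edge sequences are computed by applying the single-letter functions of $\Gamma_{2,3}$ coordinate by coordinate. Iterating Lemma~\ref{lemma:typical-g} along a topological order of the DAG gives $X_e^n \in \Tc_\e^{(n)}(U_e)$ for every edge, so $X_e^n$ admits a variable-length lossless encoding $W_e$ with $H(W_e) \le \log|\Tc_\e^{(n)}(U_e)| \le n(h_e + \delta(\e))$. Each terminal $t$ reconstructs $X_{m_s}^n$ by applying the single-letter decoder of $\Gamma_4$ coordinate by coordinate and then inverting $\phi_{m_s}$; injectivity makes the reconstruction \emph{exact}, so the probability of error is zero.

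The key technical step is weak secrecy. Fix $\a \in \Ac_s$. Since each $W_e$ is a deterministic encoding of $X_e^n$, data processing gives
\[
I(M_\Sc; W_\a) \le I(M_\Sc; X_\a^n) = H(X_\a^n) - H(X_\a^n \mid M_\Sc).
\]
The upper bound $H(X_\a^n) \le n(h_\a + \delta_1(\e))$ follows from typicality. For the lower bound, note that conditioning on $M_\Sc$ is the same as conditioning on $X_{m_\Sc}^n$ by injectivity of $\phi_{m_\Sc}$, so it suffices to bound $H(X_\a^n \mid X_{m_\Sc}^n)$ from below. I will invoke $\Gamma_6$, which gives single-letter independence $H(U_\a \mid U_{m_\Sc}) = h_\a$, and will choose the injections $\phi_{m_s}, \phi_{k_s}$ uniformly at random over all injections into the respective typical sets. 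A standard conditional-typicality argument then shows that, in expectation over this ensemble, the joint law of $(X_{m_\Sc}^n, X_\a^n)$ is close to uniform on $\Tc_\e^{(n)}(U_{m_\Sc}, U_\a)$, giving $H(X_\a^n \mid X_{m_\Sc}^n) \ge n(h_\a - \delta_2(\e))$. Combining, $n^{-1} I(M_\Sc; W_\a) \le \delta_1(\e) + \delta_2(\e) \to 0$ as $\e \to 0$, and a Markov-type averaging extracts a deterministic code satisfying all constraints simultaneously.

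The main obstacle is the lower bound $H(X_\a^n \mid X_{m_\Sc}^n) \ge n(h_\a - \delta_2(\e))$: marginal typicality of every $X_e^n$ is automatic from Lemma~\ref{lemma:typical-g}, but the secrecy bound requires enough joint \emph{spread} of $(X_{m_\Sc}^n, X_\a^n)$ across each conditional typical set, and this is where random selection of $\phi$ and careful handling of the cross-source correlations introduced by intermediate-node functions is needed. A minor bookkeeping obstacle is that typical-set encoding overshoots $h_e$ by a $\delta(\e)$ term, so matching $h_e \le c_e$ exactly relies on the $\e \to 0$ limit and the slack inherited from $\overline{\rm con}$.
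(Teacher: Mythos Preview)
Your proposal has a genuine gap in the step where you ``iterate Lemma~\ref{lemma:typical-g} along a topological order of the DAG'' to conclude that $X_e^n \in \Tc_\e^{(n)}(U_e)$ for every edge. Lemma~\ref{lemma:typical-g} requires the \emph{joint} input sequence to be typical: at a source node $s$ you would need $(X_{m_s}^n, X_{k_s}^n) \in \Tc_\e^{(n)}(U_{m_s}, U_{k_s})$, and at an internal node $v$ you would need $X_{\In(v)}^n \in \Tc_\e^{(n)}(U_{\In(v)})$. But your construction only guarantees \emph{marginal} typicality $X_{m_s}^n \in \Tc_\e^{(n)}(U_{m_s})$ and $X_{k_s}^n \in \Tc_\e^{(n)}(U_{k_s})$, and marginal typicality of independent variables does not imply joint typicality (e.g., take $U_{m_s}, U_{k_s}$ i.i.d.\ $\Bern(1/2)$ and $X_{m_s}^n = X_{k_s}^n$ the alternating sequence: both marginals are typical, but the pair has empirical $\pi(0,1)=\pi(1,0)=0$). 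Consequently your claim that every $X_e^n$ lands in $\Tc_\e^{(n)}(U_e)$ deterministically is false, and with it your capacity bound $H(W_e)\le \log|\Tc_\e^{(n)}(U_e)|$ collapses. The same failure of joint typicality undermines the ``close to uniform on $\Tc_\e^{(n)}(U_{m_\Sc},U_\a)$'' heuristic you invoke for secrecy.

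The paper sidesteps this entirely. It takes $\Mc_s$ and $\Kc_s$ to be the \emph{full} typical sets (no injections, no random $\phi$) and transmits the raw sequence $W_e = U_e^{n_t}$ on every edge, never asserting that $W_e$ is typical. The capacity constraint $H(W_e)\le nc_e$ is obtained by splitting on the indicator $I_e = \ind_{U_e^{n_t}\in\Tc_\e^{(n_t)}(U_e)}$: the typical part contributes $\le n_t(1+\e)h_e$ and the atypical part contributes $\le n_t(\log|\Uc_e|)\P(I_e=0)$, with the latter absorbed into the blocklength inflation factor $(1+\delta)$ in $n=\lceil n_t(1+\delta)/a_k\rceil$. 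For secrecy, the paper does not average over random codebooks; instead it exploits directly that the uniform distribution on $\Kc_s=\Tc_\e^{(n_t)}(U_{k_s})$ is pointwise within a factor $2^{\pm 2n_t\e h_{k_s}}$ of the i.i.d.\ law $\prod_i p_{U_{k_s}}(\cdot)$, substitutes this into $\P(U_\a^{n_t}=u_\a^{n_t}\mid U_{m_\Sc}^{n_t}=u_{m_\Sc}^{n_t})$, and then uses the single-letter independence from $\Gamma_6$ to collapse the sum to $p_{U_\a}(u_\a^{n_t})$. This yields a pointwise upper bound on the conditional probability that, after taking logs and expectations, gives $n^{-1}I(U_\a^{n_t};U_{m_\Sc}^{n_t}) \le O(\e)$ directly---no random-code extraction step is needed. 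You should either repair your argument by proving joint typicality holds with high probability (which then forces you to abandon the deterministic $X_e^n\in\Tc_\e^{(n)}(U_e)$ claim and handle the bad event), or adopt the paper's route of never asserting edge typicality at all.
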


\begin{theorem} [Outer bound]\label{thm:outer_bd-variable}
Define
\[
\Rc_{0,o} \triangleq\Lambda( {\rm Proj}_{\Sc}(\overline{\rm con}(\Gamma_{N}^{*}\cap \Gamma_{1234})\cap\Gamma_{56})).
\]
Then, $\Rc_0^{(w)}\subseteq \Rc_{0,o}$.
\end{theorem}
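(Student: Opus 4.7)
The plan is to extract, from any blocklength-$n$ code achieving a rate vector in $\Rc_0^{(w)}$, a normalized entropic vector that lies in $\overline{\rm con}(\Gamma_N^{*}\cap\Gamma_{1234})$ and approximately satisfies $\Gamma_{56}$, and then pass to the limit $n\to\infty$. Fix $\rv\in\Rc_0^{(w)}$ and a sequence $\e_n\downarrow 0$; for each $n$ take a corresponding $(n,\e_n)$-code and form the entropy vector $\hv^{(n)}\in\Gamma_N^{*}$ of the $N=2|\Sc|+|\Ec|$ random variables $((M_s,K_s)_{s\in\Sc},(W_e)_{e\in\Ec})$, whose component indexed by a subset is the joint entropy of the corresponding sub-collection. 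The code structure forces $\hv^{(n)}\in\Gamma_{1234}$ exactly: independence of all messages and keys yields $\Gamma_1$; the fact that $W_e=\eta_e(M_s,K_s)$ for every $e\in\Out(s)$ yields $\Gamma_2$; the fact that $W_e=\eta_e(W_{\In(v)})$ for every $e\in\Out(v)$ with $v\in\Vc\setminus(\Sc\cup\Tc)$ yields $\Gamma_3$; and zero probability of decoding error together with $\hat{M}_{s\to t}=\eta_{t,s}(W_{\In(t)})$ makes each demanded message a deterministic function of the incoming edges at its terminal, yielding $\Gamma_4$.

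Next I would pass to the normalized vector $\tfrac{1}{n}\hv^{(n)}$. Since $\Gamma_{1234}$ consists of homogeneous linear equalities it is closed under nonnegative scaling, and the all-zero vector, being the entropy vector of constant random variables, belongs to $\Gamma_N^{*}\cap\Gamma_{1234}$. Writing
\[
\tfrac{1}{n}\hv^{(n)}=\tfrac{1}{n}\,\hv^{(n)}+\bigl(1-\tfrac{1}{n}\bigr)\cdot\mathbf{0}
\]
exhibits $\tfrac{1}{n}\hv^{(n)}$ as a convex combination of two points of $\Gamma_N^{*}\cap\Gamma_{1234}$, hence $\tfrac{1}{n}\hv^{(n)}\in\overline{\rm con}(\Gamma_N^{*}\cap\Gamma_{1234})$. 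The capacity constraint $r_e^{(v)}\leq c_e$ becomes $\tfrac{1}{n}H(W_e)\leq c_e$, so $\Gamma_5$ holds exactly, while the weak-secrecy condition gives
\[
\tfrac{1}{n}H(M_\Sc)+\tfrac{1}{n}H(W_\a)-\tfrac{1}{n}H(M_\Sc,W_\a)=\tfrac{1}{n}I(M_\Sc;W_\a)\leq\e_n
\]
for every $\a\in\Ac_s$, so $\Gamma_6$ is violated by at most $\e_n$.

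All coordinates of $\tfrac{1}{n}\hv^{(n)}$ are uniformly bounded in $n$: edge coordinates by the capacities $c_e$; message-rate coordinates $\tfrac{1}{n}H(M_s)$ by the min-cut from $s$ to any of its destinations, since $M_s$ is a deterministic function of each destination's incoming edges; and key coordinates can be taken bounded by $\sum_{e\in\Out(s)}c_e$ without loss of generality, by pruning any component of $K_s$ that does not affect $W_{\Out(s)}$ before forming the code. By Bolzano--Weierstrass, extract a subsequence along which $\tfrac{1}{n_k}\hv^{(n_k)}\to\hv^{\ast}$. Because $\overline{\rm con}(\Gamma_N^{*}\cap\Gamma_{1234})$ is closed, the limit $\hv^{\ast}$ belongs to it; and because the inequalities defining $\Gamma_5$ and $\Gamma_6$ are closed and the $\Gamma_6$ defect $\e_{n_k}\to 0$, one has $\hv^{\ast}\in\Gamma_{56}$. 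Finally the source-message coordinates of $\hv^{\ast}$ satisfy $h^{\ast}_{m_s}=\lim_k r_{m_s}^{(n_k)}\geq r_s$, so $\rv\leq{\rm Proj}_\Sc(\hv^{\ast})$ and therefore $\rv\in\Lambda\bigl({\rm Proj}_\Sc(\overline{\rm con}(\Gamma_N^{*}\cap\Gamma_{1234})\cap\Gamma_{56})\bigr)=\Rc_{0,o}$.

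The main obstacle is reconciling the outer convex-closure operator with the fact that the secrecy constraint $\Gamma_6$ is only approximately satisfied at finite blocklength. What makes the argument go through is the structure of $\Rc_{0,o}$: the constraints $\Gamma_{56}$ are imposed \emph{outside} the convex closure of $\Gamma_N^{*}\cap\Gamma_{1234}$, so they only need to hold in the limit, exactly where $n^{-1}I(M_\Sc;W_\a)$ vanishes. A minor technical point is ensuring uniform boundedness of the key-entropy coordinates, which is handled by the pruning reduction described above.
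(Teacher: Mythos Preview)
Your proof follows essentially the same route as the paper: form the entropy vector of $((M_s,K_s)_{s\in\Sc},(W_e)_{e\in\Ec})$, verify $\Gamma_{1234}$ from independence, the encoding functions, and zero-error decoding, place $\tfrac{1}{n}\hv^{(n)}$ in $\overline{\rm con}(\Gamma_N^{*}\cap\Gamma_{1234})$ via the convex combination with $\mathbf{0}$, and then pass to the limit to tighten the approximate $\Gamma_6$ to equality. The only cosmetic difference is in the limiting step: the paper introduces relaxed sets $\Gamma_6^{(\ell)}$ and asserts $\bigcap_{\ell'\geq\ell}\Rc_o^{(\ell')}=\Rc_o$, whereas you invoke Bolzano--Weierstrass directly; both arguments rest on the same uniform boundedness of the normalized vectors, which you make explicit via the key-pruning remark while the paper leaves it unaddressed.
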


Before proving Theorem \ref{thm:inner_bd-variable}, note that since $\Gamma_{12346}$ is a linear subspace of $\Hc_{N}$, an straightforward extension of  Lemma 21.8 in \cite{yeung} yields:

\begin{lemma}\label{lemma:1}
\[
\overline{\rm con}(\Gamma_{N}^{*}\cap \Gamma_{12346}) = \overline{D(\Gamma_{N}^{*}\cap \Gamma_{12346})}
\]
\end{lemma}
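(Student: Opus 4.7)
The plan is to adapt the argument that shows $\overline{\Gamma_{N}^{*}}$ is a convex cone (Lemma 21.8 in \cite{yeung}). The essential structural property is that $A\triangleq \Gamma_{N}^{*}\cap \Gamma_{12346}$ is stable under the operation of combining two independent realizations: if $\hv$ is the entropy vector of $(X_1,\ldots,X_N)$ and $\hv'$ that of $(Y_1,\ldots,Y_N)$ with the two tuples mutually independent, then $\hv+\hv'$ is the entropy vector of $((X_1,Y_1),\ldots,(X_N,Y_N))$. Granting that $A$ inherits this closure under addition, one immediately gets closure under multiplication by positive integers (by iteration) and the presence of the zero vector (realized by deterministic variables, which trivially satisfy each of the defining equalities).

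The first step is therefore to verify that each of the five defining equalities of $\Gamma_{12346}$ is preserved by this ``direct sum'' operation. The independence-type constraints ($\Gamma_1$ and $\Gamma_6$) are preserved because joint independence of collections of variables is stable under pairing with an independent copy of the whole system. The functional-type constraints ($\Gamma_2$, $\Gamma_3$, $\Gamma_4$), each of the form $H(Z,U)=H(U)$, are preserved because if $Z^X$ is a function of $U^X$ and $Z^Y$ is a function of $U^Y$, then $(Z^X,Z^Y)$ is a function of $(U^X,U^Y)$. This verification is the only step with any content; the rest will be formal.

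With addition closure established, the inclusion $\overline{D(A)}\subseteq \overline{\rm con}(A)$ is almost trivial: for $\hv\in A$ and $\alpha\in[0,1]$, the point $\alpha\hv=\alpha\hv+(1-\alpha)\cdot\mathbf{0}$ lies in ${\rm con}(A)$ because $\mathbf{0}\in A$, and taking closures yields the inclusion. For the reverse direction, consider first a rational convex combination $\sum_i (k_i/K)\hv_i$ of points of $A$. Addition closure gives $\sum_i k_i \hv_i\in A$, and therefore
\[
\sum_i (k_i/K)\hv_i \;=\; \tfrac{1}{K}\,\Big(\sum_i k_i \hv_i\Big) \;\in\; D(A),
\]
since $1/K\in[0,1]$. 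Approximating arbitrary convex coefficients by rationals and then passing to the closure promotes this to $\overline{\rm con}(A)\subseteq \overline{D(A)}$, finishing the proof.

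The main (and really only) obstacle is the bookkeeping in the first step, but it is routine because every defining equality of $\Gamma_{12346}$ expresses either independence or a functional relationship, and both notions are invariant under the direct-sum construction. Once that verification is in hand, the rest of the argument parallels the proof of Lemma 21.8 in \cite{yeung} with no new ideas required.
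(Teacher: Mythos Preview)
Your proposal is correct and follows essentially the same approach as the paper: the paper simply notes that $\Gamma_{12346}$ is a linear subspace of $\Hc_N$ and invokes Lemma~21.8 in \cite{yeung} as a straightforward extension, without spelling out the details. Your write-up supplies those details (closure of $\Gamma_N^*\cap\Gamma_{12346}$ under the direct-sum construction, then the rational-approximation argument), but the underlying idea is the same; if anything, your verification via preservation of independence and functional constraints is just an explicit unpacking of the linear-subspace observation the paper relies on.
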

This result is used in the proof of the results.

\subsection{Proof of Theorem \ref{thm:inner_bd-variable}}

If $\rv\in\Rc_0^{(w)}$, then by our definition of achievability, any rate vector $\rv'\leq \rv$ is also achievable. Hence, to prove the theorem, it is enough to show that $\Rc_{0,i}'\triangleq{\rm Proj}_{\Sc}(\overline{\rm con}(\Gamma_{N}^{*}\cap \Gamma_{12346})\cap\Gamma_5)\subseteq \Rc$.

Let $\rv\in\Rc'_{0,i}$. By the definition of $\Rc'_{0,i}$ and Lemma \ref{lemma:1},  there exists  sequences $\hv^{(k)}\in \Gamma_{N}^{*}\cap \Gamma_{12346}$,  $a_k\in(0,1]$, and $\e_k\in(0,1]$ such that, for any $s\in\Sc$,
\begin{align}
a_k h^{(k)}_{m_s} \geq r_s-\e_k,\label{eq:source rate constraint}
\end{align}
for each edge $e\in\Ec$,
\begin{align}
a_kh^{(k)}_e\leq c_e+\e_k,\label{eq:ce}
\end{align}
 and $\e_k\to 0$, as $k$ grows to infinity.

Based on the sequence of entropic vectors  $\hv^{(k)}$, we build a sequence of weakly secure network codes with zero probability of error that asymptotically achieves rate $\rv$.

Since $\hv^{(k)}\in\Gamma_{N}^{*}\cap \Gamma_{12346}$, there exists a set of finite-alphabet random variables random variables, $\{U_{m_s},U_{k_s}: s\in\Sc\}\cup\{U_e\; : \: e\in\Ec\}$, which satisfy the following:
\begin{itemize}
\item[i.]   Random variables $(U_{m_s},U_{k_s}: s\in\Sc)$ are all independent. That is,
\begin{align}\label{eq:i}
H(U_{m_s},U_{k_s}: s\in\Sc)=\sum_{s\in\Sc} H(U_{m_s}) + \sum_{s\in\Sc}H(U_{k_s}).
\end{align}

\item[ii.] For each outgoing edge $e\in\Out(s)$ of a source node $s\in\Sc$, since $h_{m_s,k_s,\Out(s)}=h_{m_s,k_s}$, there exists a deterministic function $\eta_e$,
\begin{align}\label{eq:ii}
\eta_e:\Uc_{m_s}\times\Uc_{k_s}\to\Uc_e,
\end{align}
 such that $U_e=\eta_e(U_{m_s},U_{k_s})$.

\item[iii.] Similarly, for each edge $e=(v_1,v_2)$ with $v_1\notin\Sc$, since $h_{\In(v_1),\Out(v_1)}=h_{\In(v_1)}$, there exists a deterministic function $\eta_e$,
\begin{align}\label{eq:iii}
\eta_e:\;\prod_{e'\in\In(v_1)}\;\Uc_{e'}\to \Uc_e,
\end{align}
such that $U_e=\eta_e(U_{e'}:\; e'\in\In(v_1))$.

\item[iv.]  Corresponding to each sink node $t\in\Tc$ and source node $s\in\b(t)$, there exists a  decoding function $\eta_{t,s}$,
\begin{align}\label{eq:vi}
\eta_{t,s}:\;\prod_{e\in\In(t)}\;\Uc_e\to \Uc_{m_s},
\end{align}
 such that $U_{m_s}=\eta_{t,s}(U_e:\;e\in\In(t))$.

\end{itemize}

\subsubsection{Random code construction}

For each source $s$, consider the sets of $\e$-typical sequences $\Tc_{\e}^{(n_t)}(U_{m_s})$ and $\Tc_{\e}^{(n_t)}(U_{k_s})$ and let
\[
\Mc_s\triangleq \Tc_{\e}^{(n_t)}(U_{m_s}),
\]
and
\[
\Kc_s\triangleq \Tc_{\e}^{(n_t)}(U_{k_s}).
\]
 Moreover, define
 \[
 |\Mc_s|\triangleq2^{nr^{(k)}_{m_s}},
 \]
 and
 \[|
 \Kc_s|\triangleq2^{nr^{(k)}_{k_s}},
 \]
 where $n$ denotes the blocklength of the code and will be specified later.

 For  edge $e\in\Ec$, define random variable $I^{(n_t)}_e$ as an indicator function of whether the message traversing edge $e$ is a typical sequence, \ie
\[
I^{(n_t)}_e \triangleq \ind_{U_e^{n_t}\in\Tc_{\e}^{(n_t)}(U_e)}.
\]

The blocklength $n$ is defined as
\begin{align}
n \triangleq \left\lceil{n_t(1+\delta)\over a_k}\right\rceil,\label{eq:n}
\end{align}
where
\begin{align}
\delta \triangleq 2\Big(\max_{e\in\Ec}\Big((1+\e)\Big(1+{\e_k\over c_e}\Big)+{a_k\log|\Uc_e|\P(I^{(n_t)}_e=0)\over c_e}\Big)-1\Big).\label{eq:delta}
\end{align}
From this definition, for fixed $\e_k$, choosing $n_t$ large enough and $\e$ small enough, $\d$ can be made arbitrarily small.

%

After building the message and key codebooks in this manner, the encoding operations are performed as follows.  Each source $s\in\Sc$ chooses message $M_s$ and key $K_s$ uniformly at random from sets $\Mc_s$ and $\Kc_s$, respectively. Corresponding to  each edge $e\in\Out(s)$, by applying the function $\eta_e$ symbol-by-symbol to $(U_{m_s}^{n_t},U_{k_s}^{n_t})$, the source constructs codeword $W_e=U_e^{n_t}$. In other words,  for $e\in\Out(s)$, and $i=1,\ldots, n_t$,
\begin{align}
U_{e,i}=\eta_e(U_{m_s,i},U_{k_s,i}).\label{eq:coding-op1}
\end{align}

At  node $v\in\Vc\backslash(\Sc\cup\Tc)$, the codeword $W_e$ sent over edge $e\in\Out(v)$ is constructed from $(W_{e'})_{e'\in\In(v)}=(U^{n_t}_{e'})_{e'\in\In(v)}$ by applying the function $\eta_e$ symbol-by-symbol. That is, for $i\in[1:n_t]$,
\begin{align}
U_{e,i}=\eta_e(U_{e',i}:\; e'\in\In(v)).\label{eq:coding-op2}
\end{align}

Finally, sink node $t\in\Tc$ reconstructs the message at source $s\in\b(t)$, $\hat{M}_{s\to t}=\hat{U}_{m_s}^{n_t}$, as a function of $(W_e)_{e\in\In(t)}=(U_e^{n_t})_{e\in\In(t)}$,  such that, for $i=1,\ldots,n_t$,
\[
\Uh_{m_s,i}=\eta_{t,s}(U_{e,i}:\;e\in\In(t)).
\]

\subsubsection{Performance evaluation}

\begin{itemize}
\item[i.] Source rates:
From \eqref{eq:size-typical-set}, for $n_t$  large enough,
 \begin{align}
(1-\e) 2^{n_t(1-\e)h^{(k)}_{m_s}}< | \Tc_{\e}^{(n_t)}(U_{m_s})| < 2^{n_t(1+\e)h^{(k)}_{m_s}},\label{eq:size-Ms}
 \end{align}
 where  by our definition $h^{(k)}_{m_s}=H(U_{m_s})$. Hence, by \eqref{eq:source rate constraint} and \eqref{eq:n},
 \begin{align}
r^{(k)}_{m_s}&={\log| \Tc_{\e}^{(n_t)}(U_{m_s})|\over n}\nonumber\\
& \geq {n_t\over n}(1-\e)h^{(k)}_{m_s}  +{\log (1-\e)\over n}\nonumber\\
&  \geq {n_t(1-\e)\over n_t(1+\delta)+a_k}(r_s-\e_k)+ {a_k\log (1-\e)\over n_t(1+\delta)+a_k}.\label{eq:lower-bd-rs}
 \end{align}
Choosing $\e_k$ small enough, and then $\e$ and $n_t$ appropriately, we can make the right hand side of \eqref{eq:lower-bd-rs} arbitrarily close to $r_s$, for all $s\in\Sc$.

\item[ii.] Channels capacity constraints: As mentioned earlier, in the case of zero-error decoding, we consider variable-length channel coding. Hence, to make sure that the capacity constraints are satisfied, we need to prove that  $H(W_e)\leq nc_e$, for all $e\in\Ec$. To prove this, note that
\begin{align}
H(W_e) &\leq H(W_e,I_e)\nonumber\\
&=  H(U_e^{n_t},I_e)\nonumber\\
&= H(I_e) + H(U_e^{n_t}|I_e)\nonumber\\
& \leq 1 +H(U_e^{n_t}|I_e=1)\P(I_e=1)\nonumber\\
&\;\;\;\;+H(U_e^{n_t}|I_e=0)\P(I_e=0)\nonumber\\
& \leq 1 +\log|\Tc_{\e}^{(n_t)}(U_e)|+H(U_e^{n_t}|I_e=0)\P(I_e=0),\label{eq:upper bd on H(We)}
 \end{align}
 where, for simplifying the notation, we have replaced $I^{(n_t)}_e$ in \eqref{eq:upper bd on H(We)} by $I_e$.
Again, for $n_t$ large enough,
 \begin{align}\label{eq:size of Ue}
(1-\e) 2^{n_t(1-\e)H(U_e)}< | \Tc_{\e}^{(n_t)}(U_e^{n_t})| < 2^{n_t(1+\e)H(U_e)}.
 \end{align}
Combining \eqref{eq:ce}, \eqref{eq:n},  \eqref{eq:delta}, \eqref{eq:upper bd on H(We)} and \eqref{eq:size of Ue} yields
\begin{align}
H&(W_e) \leq 1 +n_t(1+\e)H(U_e)+n_t\log|\Uc_e|\P(I_e=0)\nonumber\\
&\leq 1 +{na_k\over 1+\d}((1+\e)h_{e}^{(k)}+\log|\Uc_e|\P(I_e=0))\nonumber\\
&\leq 1 +{na_k\over 1+\d}((1+\e)({c_e+\e_k\over a_k})+\log|\Uc_e|\P(I_e=0))\nonumber\\
&\leq 1 +{nc_e\over 1+\d}\Big((1+\e)(1+{\e_k\over c_e})+{a_k\log|\Uc_e|\P(I_e=0)\over c_e}\Big)\nonumber\\
&\leq 1+\left({1+0.5\d\over 1+\d}\right) nc_e\nonumber\\
&\leq nc_e,
 \end{align}
 where the last step holds only for $n$ large enough.

\item[iii.] Decoding error probability: We prove that  the decoding error probability of the described code, \ie
\[
P_e^{(n)}\triangleq \P(\hat{M}_{s\to t}\neq M_{s}, {\rm for}\;{\rm some}\;t\in\Tc,\;s\in\b(t)),
\]
 is zero. To prove this, we show that $\P(\hat{M}_{s\to t}\neq M_s)=0$, for any $t\in\Tc$ and $s\in\b(t)$.

As mentioned earlier, since $\hv^{(k)}\in\Gamma_4$, the jointly distributed random variables $(U_e)_{e\in\Ec}$ and $(U_{m_s},U_{k_s})_{s\in\Sc}$ satisfy
\[
U_{m_s}=\eta_{t,s}(U_e:\;e\in\In(t)),
\]
for all $t\in\Tc$ and $s\in\b(t)$, where $\eta_{t,s}(\cdot)$ is a deterministic function.  This implies that for  any  $u_{m_s}\in\Uc_{m_s}$, $t\in\Tc$ and $u_{\In(t)}\in\prod_{e\in\In(t)}\Uc_e$,
\begin{align}
\P(U_{m_s}=u_{m_s}|U_{\In(t)}=u_{\In(t)})=1,
\end{align}
if  $u_{m_s}=\eta_{t,s}(u_{\In(t)})$, and zero otherwise. Assume that for $u_{m_s}\in\Uc_s$ and  $u_{\In(t)}\in\prod_{e\in\In(t)}\Uc_e$ are such that $\P(U_{k_s}=u_{m_s}|U_{\In(t)}=u_{\In(t)})=0$. Then, it follows that
\begin{align}
0&=\sum\limits_{\substack{(u_{m_{\Sc\backslash\{s\}}},u_{k_{\Sc}})\in \\\prod\limits_{s'\in\Sc\backslash\{s\}}\Uc_{m_{s'}}\times \prod\limits_{s'\in\Sc}\Uc_{k_{s'}}}} p(u_{m_s},u_{m_{\Sc\backslash\{s\}}},u_{k_{\Sc}}|u_{\In(t)})\nonumber\\
&=\sum\limits_{(u_{m_{\Sc\backslash\{s\}}}, u_{k_{\Sc}})} p(u_{m_{\Sc\backslash\{s\}}},u_{k_{\Sc}}|u_{\In(t)}) p(u_{m_s}|u_{\In(t)},u_{m_{\Sc\backslash\{s\}}},u_{k_{\Sc}}).\label{eq:zero-prob}
\end{align}
Therefore, from \eqref{eq:zero-prob}, for any $u_{m_{\Sc\backslash\{s\}}}\in\prod_{s'\in\Sc\backslash\{s\}}\Uc_{s'}$ and $u_{k_{\Sc}}\in \prod_{s'\in\Sc}\Uc_{k_{s'}}$ such that $p(u_{m_{\Sc\backslash\{s\}}},u_{k_{\Sc}}|u_{\In(t)})\neq 0$, or equivalently $p(u_{m_{\Sc\backslash\{s\}}},u_{k_{\Sc}}|u_{\In(t)})\neq 0$,
\begin{align}
p(u_{m_s}|u_{\In(t)},u_{m_{\Sc\backslash\{s\}}},u_{k_{\Sc}})=0.\label{eq:reconstructions}
\end{align}
Equation \eqref{eq:reconstructions} implies that for any  $u_{m_{\Sc\backslash\{s\}}}$, $u_{k_{\Sc}}$ and $u_{\In(t)}$ such that there exists $u_{m_s}\in\Uc_{m_s}$, for which applying encoding operations $\eta_e(\cdot)$ to input vectors $u_{m_{\Sc}}$, $u_{k_{\Sc}}$  results in the input edges of sink node $t$ taking values  $u_{\In(t)}$, we have
\[
u_{m_s} = \eta_{t,s}(u_{\In(t)}).
\]
Clearly, by the union bound,
\begin{align}
\P(\Uh_{m_s}^{n_t}\neq U_{m_s}^{n_t}) \leq \sum_{i=1}^{n_t}\P(\Uh_{m_s,i}\neq U_{m_s,i}).
\end{align}
Hence to show that $\P(\Uh_{m_s}^{n_t}\neq U_{m_s}^{n_t})=0$, it suffices to show that $\P(\Uh_{m_s,i}\neq U_{m_s,i})=0$, for all $i=1,\ldots,n_t$. But, the above argument guarantees that
\[
\P(\Uh_{m_s,i}\neq U_{m_s,i})=0,
\]
for all $i=1,\ldots,n$.

\item[iv.] Security guarantee: We  prove that for any $\a\in\Ac_s$,
\[
{1\over n} I(M_{\Sc};W_{\a})={1\over n} I(U^{n_t}_{m_{\Sc}};U_{\a}^{n_t})
\]
can be made arbitrarily small.

 For any $\a\in\Ac_s$ and $u_{m_{\Sc}}^{n_t}\in\prod_{s\in\Sc}\Mc_s$, we have
\begin{align}
\P(U_{\a}^{n_t}=u_{\a}^{n_t}|U_{m_{\Sc}}^{n_t}=u_{m_{\Sc}}^{n_t})&=\sum_{u_{k_{\Sc}}^{n_t}\in\Uc_{k_{\Sc}}^{n_t}}\P(U_{\a}^{n_t}=u_{\a}^{n_t},U_{k_{\Sc}}^{n_t}=u_{k_{\Sc}}^{n_t}|U_{m_{\Sc}}^{n_t}=u_{m_{\Sc}}^{n_t})\nonumber\\
&=\sum_{u_{k_{\Sc}}^{n_t}\in\Uc_{k_{\Sc}}^{n_t}}\P(U_{\a}^{n_t}=u_{\a}^{n_t}|U_{k_{\Sc}}^{n_t}=u_{k_{\Sc}}^{n_t},U_{m_{\Sc}}^{n_t}=u_{m_{\Sc}}^{n_t})\P(U_{k_{\Sc}}^{n_t}=u_{k_{\Sc}}^{n_t})\nonumber\\
&=\sum_{u_{k_{\Sc}}^{n_t}\in\Uc_{k_{\Sc}}^{n_t}}\;p_{\Kc_{\Sc}}(u_{k_{\Sc}}^{n_t})\;\prod_{i=1}^{n_t}\;p_{U_{\a}|U_{k_{\Sc},m_{\Sc}}}(u_{\a,i}|u_{k_{\Sc},i},u_{m_{\Sc},i}),\label{eq:secrecy-step1}
\end{align}
where the last step follows from our symbol-by-symbol  coding strategy described by  \eqref{eq:coding-op1} and \eqref{eq:coding-op2}.

Let \[\Kc_{\Sc}\triangleq \prod_{s\in\Sc}\Kc_s.\] At source node $s\in\Sc$, the message and the key  are drawn from a uniform distribution from $\Mc_s$ and $\Kc_s$, respectively. Hence, for $u_{k_{\Sc}}^{n_t}\in\Kc_{\Sc}$,
\[
\P(U_{k_{\Sc}}^{n_t}=u_{k_{\Sc}}^{n_t}) = \prod_{s\in\Sc} {1\over |\Kc_s|} = \prod_{s\in\Sc} 2^{-nr_{k_s}^{(k)}}=  2^{-n\sum\limits_{s\in\Sc}r_{k_s}^{(k)}}.
\]
For $u_{k_{\Sc}}^{n_t}\in\prod_{s\in\Sc}\Uc^{n_t}_{m_s} \backslash \Kc_{\Sc}$,
\[
p_{\Kc_{\Sc}}(u_{k_{\Sc}}^{n_t}) = 0.
\]
For $u_{k_s}^{n_t}\in\Kc_s$, since $\Kc_s= \Tc_{\e}^{(n_t)}(U_{k_s})$,
\begin{align}
2^{-n_th_{k_s}^{(k)}(1+\e)}\leq p_{U_{k_s}}(u_{k_s}^{n_t})\leq2^{-n_th_{k_s}^{(k)}(1-\e)},\label{eq:prob uk iid}
\end{align}
where $p_{U_{k_s}}(u_{k_s}^{n_t})=\prod_{i=1}^{n_t}p_{U_{k_s}}(u_{k_s,i})$. On the other hand,
\begin{align}
p_{\Kc_s}(u_{k_s}^{n_t})&={1\over|\Kc_s|}=2^{-nr_{k_s}^{(k)}}.\label{eq:prob uk codebook}
\end{align}
Combining \eqref{eq:size-typical-set}, \eqref{eq:prob uk iid} and \eqref{eq:prob uk codebook} yields
\begin{align}
2^{-2n_t\e h_{k_s}^{(k)}}p_{U_{k_s}}(u_{k_s}^{n_t}) \leq p_{\Kc_s}(u_{k_s}^{n_t})\leq({1\over 1-\e}) 2^{2n_t\e h_{k_s}^{(k)}}p_{U_{k_s}}(u_{k_s}^{n_t}),\label{eq:bound-prob-in-Ks}
\end{align}
for any $u_{k_s}^{n_t}\in\Kc_s$.
For $u_{k_s}^{n_t}\notin\Kc_s$, $p_{\Kc_s}(u_{k_s}^{n_t})=0$, and hence
\begin{align}
0 \leq p_{\Kc_s}(u_{k_s}^{n_t})\leq p_{U_{k_s}}(u_{k_s}^{n_t}).\label{eq:bound-prob-notin-Ks}
\end{align}
Therefore, by inserting \eqref{eq:bound-prob-in-Ks}, and \eqref{eq:bound-prob-notin-Ks} in \eqref{eq:secrecy-step1}, it follows that
\begin{align}
\P&(U_{\a}^{n_t}=u_{\a}^{n_t}|U_{m_{\Sc}}^{n_t}=u_{m_{\Sc}}^{n_t})\nonumber\\
&=\sum_{u_{k_{\Sc}}^{n_t}\in\Kc_{\Sc}}\;p_{\Kc_{\Sc}}(u_{k_{\Sc}}^{n_t})\;\prod_{i=1}^{n_t}\;p_{U_{\a}|U_{k_{\Sc},m_{\Sc}}}(u_{\a,i}|u_{k_{\Sc},i},u_{m_{\Sc},i})\nonumber\\
&\stackrel{(a)}{\leq} \sum_{u_{k_{\Sc}}^{n_t}\in\Uc_{k_{\Sc}}^{n_t}}({1\over 1-\e})^{|\Sc|} 2^{2n_t\e (\sum\limits_{s\in\Sc}h_{k_s}^{(k)})}\;\prod_{i=1}^{n_t}\;p_{U_{\a}|U_{k_{\Sc},m_{\Sc}}}(u_{\a,i}|u_{k_{\Sc},i},u_{m_{\Sc},i})p_{U_{k_s}}(u_{k_s,i})\nonumber\\
&\leq({1\over 1-\e})^{|\Sc|} 2^{2n_t\e (\sum\limits_{s\in\Sc}h_{k_s}^{(k)})} \;\prod_{i=1}^{n_t} \sum_{u_{k_{\Sc,i}}\in\Uc_{k_{\Sc}}}p_{U_{\a}|U_{k_{\Sc},m_{\Sc}}}(u_{\a,i}|u_{k_{\Sc},i},u_{m_{\Sc},i})\prod_{s\in\Sc}p_{U_{k_s}}(u_{k_s,i}),\label{eq:secrecy-step3}
\end{align}
where (a) follows from the fact that $p_{\Kc_{\Sc}}(u_{k_{\Sc}}^{n_t})=0$, for $u_{k_{\Sc}}^{n_t}\in\Uc_{k_{\Sc}}^{n_t}\backslash\Kc_{\Sc}$.

By  assumption, $\hv^{(k)}\in\Gamma_6$,  and therefore, $I(U_{\a};U_{m_{\Sc}})=0$. This implies that, for any $u_{\a}\in\Uc_{\a}$ and $u_{m_{\Sc}}\in\Uc_{m_{\Sc}}$, $p(u_{\a}|u_{m_{\Sc}})=p(u_{\a})$. Hence,
\begin{align}
\sum_{u_{k_{\Sc,i}}} p_{U_{\a}|U_{k_{\Sc},m_{\Sc}}}(u_{\a,i}|u_{k_{\Sc},i},u_{m_{\Sc},i})\prod_{s\in\Sc}p_{U_{k_s}}(u_{k_s,i})
&=\sum_{u_{k_{\Sc,i}}} p_{U_{\a},U_{k_{\Sc}}|U_{m_{\Sc}}}(u_{\a,i},u_{k_{\Sc},i}|u_{m_{\Sc},i})\nonumber\\
&= p_{U_{\a}|U_{m_{\Sc}}}(u_{\a,i}|u_{m_{\Sc},i})\nonumber\\
&= p_{U_{\a}}(u_{\a,i}).\label{eq:independence of alpha from m}
\end{align}
Combining \eqref{eq:secrecy-step3} and \eqref{eq:independence of alpha from m}, it follows that
\begin{align}
\P(U_{\a}^{n_t}=u_{\a}^{n_t}|U_{m_{\Sc}}^{n_t}=u_{m_{\Sc}}^{n_t})&\leq({1\over 1-\e})^{|\Sc|} 2^{2n_t\e (\sum\limits_{s\in\Sc}h_{k_s}^{(k)})} \;\prod_{i=1}^{n_t} p_{U_{\a}}(u_{\a,i})\nonumber\\
&=({1\over 1-\e})^{|\Sc|} 2^{2n_t\e (\sum\limits_{s\in\Sc}h_{k_s}^{(k)})}  p_{U_{\a}}(u_{\a}^{n_t}),
\end{align}
or equivalently,
\begin{align}
-{1\over n}\log \P(U_{\a}^{n_t}=u_{\a}^{n_t}|U_{m_{\Sc}}^{n_t}=u_{m_{\Sc}}^{n_t})&\geq -{1\over n}\log p_{U_{\a}}(u_{\a}^{n_t}) + {|\Sc|\log( 1-\e)\over n} -  2\e ({n_t\over n})(\sum\limits_{s\in\Sc}h_{k_s}^{(k)}).\label{eq:lb cond prob}
\end{align}
Multiplying both sides of \eqref{eq:lb cond prob} by  $\P(U_{m_{\Sc}}^{n_t}=u_{m_{\Sc}}^{n_t},U_{\a}^{n_t}=u_{\a}^{n_t})$ and taking the sum over all $u_{\a}^{n_t}\in\Uc_{\a}^{n_t}$ and $u_{m_{\Sc}}^{n_t}\in\Uc_{m_{\Sc}}^{n_t}$, we get
\begin{align}
{1\over n} H(U_{\a}^{n_t}|U_{m_{\Sc}}^{n_t})&\geq {1\over n}H(U_{\a}^{n_t}) + \sum_{u_{\a}^{n_t}\in\Uc_{\a}^{n_t}} \P(U_{\a}^{n_t}=u_{\a}^{n_t}) \log  {\P(U_{\a}^{n_t}=u_{\a}^{n_t})\over p_{U_{\a}}(u_{\a}^{n_t})} + {|\Sc|\log( 1-\e)\over n} -  2\e ({n_t\over n})(\sum\limits_{s\in\Sc}h_{k_s}^{(k)}),
\end{align}
or, rearranging the terms,
\begin{align}
{1\over n} I(U_{\a}^{n_t};U_{m_{\Sc}}^{n_t})+ \sum_{u_{\a}^{n_t}\in\Uc_{\a}^{n_t}} \P(U_{\a}^{n_t}=u_{\a}^{n_t}) \log  {\P(U_{\a}^{n_t}=u_{\a}^{n_t})\over p_{U_{\a}}(u_{\a}^{n_t})} \leq {|\Sc|\log( 1-\e)\over n} + 2\e ({n_t\over n})(\sum\limits_{s\in\Sc}h_{k_s}^{(k)}).\label{eq:I+D}
\end{align}
The second term on the left hand side of \eqref{eq:I+D} is the Kullback Leibler distance between $ \P(U_{\a}^{n_t}=u_{\a}^{n_t})$ and $p_{U_{\a}}(u_{\a}^{n_t})$ and hence is positive. Therefore,
\begin{align}
{1\over n} I(U_{\a}^{n_t};U_{m_{\Sc}}^{n_t}) \leq {|\Sc|\log( 1-\e)\over n} + 2\e({n_t\over n}) (\sum\limits_{s\in\Sc}h_{k_s}^{(k)}).\label{eq:bound-secrecy}
\end{align}
But letting $n$ grow to infinity and choosing $\e$ small enough, we can make the right hand side of \eqref{eq:bound-secrecy} arbitrarily small, for every $\a\in\Ac_{s}$.

\end{itemize}

\subsection{Proof of Theorem \ref{thm:outer_bd-variable}}
Let $\rv\in\Rc_0^{(w)}$ be  achievable. We prove that $\rv\in\Rc_{0,o}$.
Since $\rv\in\Rc_0^{(w)}$, there exists a family of weakly secure network codes of growing blocklength $n_{\ell}$, source rates $\rv^{(n_{\ell})}_{m_{\Sc}}$ and key rates $\rv^{(n_{\ell})}_{k_{\Sc}}$, and a sequence $(\e_{\ell})$, $\ell\geq 1$, which monotonically converges to zero, such that, for any source $s\in\Sc$,
\begin{align}
r_{m_s}^{(n_{\ell})}\geq r_s-\e_{\ell},\label{eq:source-rate-lb}
\end{align}
and for any set of edges $\a\in\Ac_s$,
\begin{align}\label{eq:sec-const}
{1\over n_{\ell}} I(M_{\Sc};W_{\a}) \leq \e_{\ell}.
\end{align}
Let $(M^{(\ell)}_s,K^{(\ell)}_s)_{s\in\Sc}$ and $(W^{(\ell)}_e)_{e\in\Ec}$ denote the source messages, keys, and edge messages of the code of blocklength $n_{\ell}$, respectively.

Corresponding to  each $\ell\in\mathds{N}$, define $N$ random variables $(Y^{(\ell)}_{m_s})_{s\in\Sc}, (Y^{(\ell)}_{k_s})_{s\in\Sc}$ and $(Y^{(\ell)}_{e})_{e\in\Ec}$ as follows.  For $s\in\Sc$,
\begin{align}\label{eq:Yms}
Y^{(\ell)}_{m_s}=M^{(\ell)}_s,
\end{align}
and
\begin{align}\label{eq:Yks}
Y^{(\ell)}_{k_s}=K^{(\ell)}_s,
\end{align}
and for $e\in\Ec$,
\begin{align}\label{eq:Ye}
Y^{(\ell)}_e=W^{(\ell)}_e.
\end{align}  Let $\hv^{(\ell)}$ denote the entropy function of the defined $N$ random variables. From the independence of the source messages and keys, it follows that
\[
H((M^{(\ell)}_s,K^{(\ell)}_s)_{s\in\Sc})=\sum_{s\in\Sc}(H(M^{(\ell)}_s)+H(K^{(\ell)}_s)).
\]
Hence, $\hv^{(\ell)}\in\Gamma_1$.  Consider message $W^{(\ell)}_e$ transmitted on edge $e=(v_1,v_2)\in\Ec$. If $v_1$ is a source node, \ie $v_1\in\Sc$, then $W^{(\ell)}_e$ is a function of source message $M^{(\ell)}_s$ and key $K^{(\ell)}_s$. In other words, $H(W^{(\ell)}_e|M^{(\ell)}_s,K^{(\ell)}_s)=0$, or equivalently $H(M^{(\ell)}_s,K^{(\ell)}_s,W^{(\ell)}_e)=H(M^{(\ell)}_s,K^{(\ell)}_s)$. Therefore, $\hv^{(\ell)}\in\Gamma_2$.   If $v_1$ is not a source message, then $W^{(\ell)}_e$ is a function of the messages traversing the incoming edges of node $v_1$,  and hence $\hv^{(\ell)}\in\Gamma_3$ as well.

The code is assumed to have zero probability of error. Therefore, for each sink node $t\in\Tc$, source messages in $\b(t)$ are a deterministic function of the incoming message of node $t$. That is, for $t\in\Tc$,
\[
H(M^{(\ell)}_{\b(t)}|W^{(\ell)}_{\In(t)})=H((M^{(\ell)}_s)_{s\in\b(t)}|(W^{(\ell)}_e)_{e\in\In(t)})=0.
\]
Hence, $\hv^{(\ell)}\in\Gamma_4$ as well. Clearly, $\hv^{(\ell)}$ is an entropic vector, \ie $\hv^{(\ell)}\in\Gamma_{N}^*$. Combining all these results, it follows that
\[
\hv^{(\ell)}\in \Gamma_{N}^{*}\cap \Gamma_{1234}.
\]
Moreover,  ${\bf 0}\in\Gamma_{N}^{*}\cap \Gamma_{1234}$. Therefore,  since  $\overline{\rm con}(\Gamma_{N}^{*}\cap \Gamma_{1234})$ is a convex set,
\[
(1-n_{\ell}^{-1}){\bf 0}+n_{\ell}^{-1}\hv^{(\ell)} = n_{\ell}^{-1}\hv^{(\ell)} \in   \overline{\rm con}(\Gamma_{N}^{*}\cap \Gamma_{1234}).
\]

As mentioned earlier, at blocklength $n_{\ell}$, the message transmitted on edge $e$, $W^{(\ell)}_e$, should satisfy $H(W^{(\ell)}_e)\leq n_{\ell}c_e$. Therefore, $n_{(\ell)}^{-1}\hv^{(\ell)}\in\Gamma_5$, and overall,
\begin{align}\label{eq:step1}
n^{-1}_{\ell}\hv^{(\ell)}\in\overline{\rm con}(\Gamma_{N}^{*}\cap \Gamma_{1234})\cap\Gamma_5.
\end{align}

Given the sequence, $\e_{\ell}$, $\ell=1,2,\ldots$, define a sequence of sets as
\begin{align}
\Gamma_{6}^{(\ell)} \triangleq \big\{\hv\in\Hc_{N}:\;  h_{m_{\Sc}}+h_{\a}-h_{m_{\Sc},\a} \leq \e_{\ell}, \; \forall\a\in\Ac_s \}.
\end{align}

Combining \eqref{eq:sec-const} and \eqref{eq:step1}, it follows that
\begin{align}
n_{\ell}^{-1}\hv^{(\ell)}\in\overline{\rm con}(\Gamma_{N}^{*}\cap \Gamma_{1234})\cap\Gamma_5\cap \Gamma_6^{(\ell)}.\label{eq:outer-bd-1}
\end{align}

To finish the proof, define
\begin{align}
\Rc_{o}^{(\ell)}\triangleq\Lambda( {\rm Proj}_{\Sc}(\overline{\rm con}(\Gamma_{N}^{*}\cap \Gamma_{1234})\cap\Gamma_5\cap \Gamma_6^{(\ell)})).\label{eq:outer-bd-2}
\end{align}
By \eqref{eq:source-rate-lb}, ${n_{\ell}^{-1}}h^{(\ell)}_s={n_{\ell}^{-1}}H(M_s^{(\ell)})\geq r_s-\e_{\ell}$ for all $s\in\Sc$. Combing this with \eqref{eq:outer-bd-1}, and \eqref{eq:outer-bd-2}, yields
\[
\rv-\e_{\ell}\in\Rc_{o}^{(\ell)}.
\]
Since $\e_{\ell}$ is monotonically decreasing to zero, and since construction of $\Rc_{o}^{(\ell)}$ involves applying the $\Lambda(\cdot)$ operation, we have
\[
\rv-\e_{\ell}\in\bigcap\limits_{\ell'=\ell}^{\infty}\Rc_{o}^{(\ell')}.
\]
On the other hand, for any $\ell\geq 1$,
\[
\bigcap\limits_{\ell'=\ell}^{\infty}\Rc_{o}^{(\ell')}=\Rc_o.
\]
Therefore,  $\rv-\e_{\ell}\in \Rc_o$, for $\ell\geq 1$. But, $\Rc_o$ is a closed set. Therefore, $\rv\in\Rc_o$ as well. This concludes the proof.


\section{Capacity region: asymptotically zero-error coding}\label{sec:c-region-e}

In this section we study  capacity regions of wiretap networks with asymptotically zero-error reconstructions  subject to weak secrecy. We derive inner and outer bounds on the capacity regions of such networks.

 \begin{theorem} [Inner bound]\label{thm:inner_bd-fixed}
Define
\[
\Rc_{\e,i}\triangleq \Lambda({\rm Proj}_{\Sc}(\overline{\rm con}(\Gamma_{N}^{*}\cap \Gamma_{1236})\cap\Gamma_{45})).
\]
Then,
\[
\Rc_{\e,i}\subseteq \Rc_{\e}^{(w)}.
\]
\end{theorem}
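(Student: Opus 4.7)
The plan is to follow the four-part structure of the proof of Theorem \ref{thm:inner_bd-variable} -- setup, codebook construction, decoding, and secrecy -- modifying only the two steps affected by the shift from variable-length/zero-error to fixed-length/asymptotically-zero-error coding.

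First, I would start from $\rv\in\Rc_{\e,i}$ and extract, using the natural extension of Lemma~\ref{lemma:1} (the argument uses only that the intersected $\Gamma$-sets form a linear subspace, which $\Gamma_{1236}$ still is), sequences $\hv^{(k)}\in\Gamma_N^{*}\cap\Gamma_{1236}$, scalars $a_k\in(0,1]$, and $\e_k\downarrow 0$ such that (i) $a_kh^{(k)}_{m_s}\geq r_s-\e_k$ for all $s\in\Sc$, (ii) $a_kh^{(k)}_e\leq c_e+\e_k$ for all $e\in\Ec$, and (iii) $a_k(h^{(k)}_{m_{\b(t)},\In(t)}-h^{(k)}_{\In(t)})=a_kH(U^{(k)}_{m_{\b(t)}}|U^{(k)}_{\In(t)})\leq\e_k$ for all $t\in\Tc$. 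Conditions (i), (ii), (iii) follow respectively from ${\rm Proj}_{\Sc}(\hv^{*})\geq\rv$, $\hv^{*}\in\Gamma_5$ and $\hv^{*}\in\Gamma_4$, where $\hv^{*}=\lim_k a_k\hv^{(k)}$. Because $\hv^{(k)}\in\Gamma_6$, the single-letter security identity $I(U^{(k)}_{\a};U^{(k)}_{m_{\Sc}})=0$ holds exactly for every $\a\in\Ac_s$. I would then extract finite-alphabet random variables realizing each $\hv^{(k)}$, together with the deterministic edge-encoding maps $\eta_e$ implied by $\Gamma_2$ and $\Gamma_3$.

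Second, I would keep the codebook construction of the zero-error proof almost verbatim: set $n=\lceil n_t(1+\delta)/a_k\rceil$ with $\delta$ defined as in \eqref{eq:delta}, take $\Mc_s=\Tc_\e^{(n_t)}(U_{m_s})$ and $\Kc_s=\Tc_\e^{(n_t)}(U_{k_s})$, draw each message and key uniformly, and apply the $\eta_e$ symbol by symbol. Two things change at the physical-link level. First, the bijection between message indices and typical sequences is now drawn uniformly at random, a randomization used only in the decoding analysis below. Second, fixed-length channel coding forces $|\Wc_e|\leq 2^{nc_e}$, so each typical edge output $u_e^{n_t}\in\Tc_\e^{(n_t)}(U_e)$ is mapped bijectively into $\Wc_e$ together with one reserved dummy symbol for the atypical event. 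The inclusion $|\Tc_\e^{(n_t)}(U_e)|+1\leq 2^{nc_e}$ follows from \eqref{eq:size of Ue}, $a_kh^{(k)}_e\leq c_e+\e_k$, and the choice of $\delta$, exactly as in the zero-error computation; by Lemma~\ref{lemma:typical-g} every edge signal remains in its typical set whenever the source sequences are typical, so the dummy symbol is invoked only on a vanishing-probability event.

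The real departure is the decoding step. Because $\hv^{(k)}$ need not lie in $\Gamma_4$, there is no deterministic $\eta_{t,s}$; instead each sink $t$ performs joint-typicality decoding, declaring $\hat m_{\b(t)}$ if it is the unique index (together with some compatible choice of keys and of sources outside $\b(t)$) whose assigned typical sequence is jointly $\e$-typical with the observed $U_{\In(t)}^{n_t}$ under $p(U_{m_{\b(t)}},U_{\In(t)})$. Averaging over the random bijection, a standard packing-lemma calculation bounds the expected probability of a spurious competitor by $|\Mc_{\b(t)}|\cdot 2^{-n_t(I(U^{(k)}_{m_{\b(t)}};U^{(k)}_{\In(t)})-O(\e))}$, and the per-channel-use mutual-information rate
\[
a_kI(U^{(k)}_{m_{\b(t)}};U^{(k)}_{\In(t)})=a_kh^{(k)}_{m_{\b(t)}}-a_kH(U^{(k)}_{m_{\b(t)}}|U^{(k)}_{\In(t)})
\]
converges by (i) and (iii) to $h^{*}_{m_{\b(t)}}=\sum_{s\in\b(t)}h^{*}_{m_s}\geq\sum_{s\in\b(t)}r_s$; an analogous argument takes care of the MAC-type constraints for every non-empty subset of $\b(t)$. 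Together with the vanishing probability of the atypical source event and a standard expurgation step, this produces a deterministic code with $\P(\hat M_{s\to t}\neq M_s)\to 0$.

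The secrecy accounting is unchanged from Theorem \ref{thm:inner_bd-variable}: $\hv^{(k)}\in\Gamma_6$ and the symbol-by-symbol encoding structure are both preserved, so the chain of identities leading to \eqref{eq:bound-secrecy} still applies and gives ${1\over n}I(M_\Sc;W_\a)\to 0$ for every $\a\in\Ac_s$. The main obstacle I expect is the decoding step: one must reconcile the fact that each individual $\hv^{(k)}$ lies only in $\Gamma_{1236}$ -- so exact symbol-by-symbol decoding is generally impossible -- with the requirement of vanishing block error. The resolution is precisely that $\hv^{*}\in\Gamma_4$ forces $a_kH(U^{(k)}_{m_{\b(t)}}|U^{(k)}_{\In(t)})\to 0$, which simultaneously controls the packing-bound exponent and makes the per-channel-use mutual-information rate converge to the target $h^{*}_{m_s}$.
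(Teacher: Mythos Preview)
Your overall plan—invoke Lemma~\ref{lemma:2}, extract an approximating sequence $a_k\hv^{(k)}\in\Gamma_N^*\cap\Gamma_{1236}$, build message and key codebooks from typical sets, encode symbol by symbol with a reserved dummy symbol to meet the fixed-length constraint, and recycle the secrecy calculation of Theorem~\ref{thm:inner_bd-variable}—matches the paper's proof. The paper likewise uses joint-typicality decoding; it does \emph{not} give an error-probability argument of its own but defers entirely to \cite{YanY:12}. Your attempt to supply that missing step via a packing lemma, however, does not go through.

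The bound you write is $|\Mc_{\b(t)}|\cdot 2^{-n_t(I(U^{(k)}_{m_{\b(t)}};U^{(k)}_{\In(t)})-O(\e))}$. Because the codebook is the \emph{entire} typical set (you insist on a bijection), $|\Mc_{\b(t)}|\approx 2^{n_t H(U^{(k)}_{m_{\b(t)}})}$, so the bound equals $2^{n_t(H(U^{(k)}_{m_{\b(t)}}|U^{(k)}_{\In(t)})+O(\e))}$, which is never less than $1$. The asymptotic you compute, $a_kI\to h^*_{m_{\b(t)}}\geq\sum_{s\in\b(t)}r_s$, is a statement about per-$n$ rates; the packing exponent, on the other hand, is $n_t\bigl(H(U^{(k)}_{m_{\b(t)}})-I(U^{(k)}_{m_{\b(t)}};U^{(k)}_{\In(t)})\bigr)=n_tH(U^{(k)}_{m_{\b(t)}}|U^{(k)}_{\In(t)})$, and condition (iii) only gives $n_tH(\cdot|\cdot)\le n_t\e_k/a_k\approx n\e_k$, which is nonnegative and diverges for every fixed $k$ as $n\to\infty$. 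Randomizing the bijection does not help: it only makes each competitor marginally uniform over the typical set, which is exactly the calculation above. In short, when all typical sequences are codewords the per-symbol coding rate sits \emph{at} $H(U_{m_{\b(t)}})$, not strictly below $I(U_{m_{\b(t)}};U_{\In(t)})$, so the standard random-coding union bound is vacuous. The analysis in \cite{YanY:12} that the paper invokes is genuinely more delicate and cannot be replaced by a one-line packing argument; note also that shrinking the codebook to a random subset of size $2^{n_t(I-\delta)}$ would rescue the packing step but would break the symbol-by-symbol structure on which your (and the paper's) secrecy argument rests.

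A smaller correction: the secrecy step is not literally unchanged. Since $W_e$ now equals the dummy symbol on the atypical event, $W_\a$ is a function of $(U_\a^{n_t},I_\a)$ with $I_e=\ind_{W_e\neq 0}$, and one needs the extra bound $n^{-1}I(M_{\Sc};W_{\a})\le n^{-1}I(M_{\Sc};U_{\a}^{n_t})+n^{-1}H(I_\a)\le n^{-1}I(M_{\Sc};U_{\a}^{n_t})+n^{-1}|\a|$ before invoking \eqref{eq:bound-secrecy}. The paper makes exactly this adjustment.
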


 \begin{theorem} [Outer bound]\label{thm:outer_bd-fixed}
Define
\[
\Rc_{\e,o}\triangleq \Lambda({\rm Proj}_{\Sc}(\overline{\rm con}(\Gamma_{N}^{*}\cap \Gamma_{123})\cap\Gamma_{456})).
\]
Then,
\[
\Rc_{\e}^{(w)}\subseteq \Rc_{\e,o}.
\]
\end{theorem}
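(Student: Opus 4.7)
\textbf{Proof proposal for Theorem \ref{thm:outer_bd-fixed}.} The plan is to follow the outline of the zero-error outer bound proof (Theorem \ref{thm:outer_bd-variable}), but with two structural modifications that account for (a) asymptotically zero, rather than exactly zero, error probability and (b) fixed-length, rather than variable-length, channel coding. Concretely, given $\rv\in\Rc_{\e}^{(w)}$, I would fix a sequence of codes of growing blocklengths $n_\ell\to\infty$ and vanishing error/secrecy parameter $\e_\ell\downarrow 0$, satisfying $r_{m_s}^{(n_\ell)}\geq r_s-\e_\ell$, $n_\ell^{-1}\log|\Wc_e|\leq c_e$, $\P(\Mh_{s\to t}\neq M_s)\leq\e_\ell$, and $n_\ell^{-1}I(M_{\Sc};W_{\a})\leq\e_\ell$. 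As in the zero-error proof, I would attach random variables $Y_{m_s}^{(\ell)}=M_s^{(\ell)}$, $Y_{k_s}^{(\ell)}=K_s^{(\ell)}$, $Y_e^{(\ell)}=W_e^{(\ell)}$, and let $\hv^{(\ell)}$ be their entropy vector; independence of sources/keys and the fact that each $W_e^{(\ell)}$ is a deterministic function of its node's inputs immediately give $\hv^{(\ell)}\in\Gamma_{N}^{*}\cap\Gamma_{123}$.

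Next I would verify the remaining three constraints, this time in their ``approximate'' (unclosed) form. For $\Gamma_5$: since we are in the fixed-length regime, $n_\ell^{-1}h_e^{(\ell)}=n_\ell^{-1}H(W_e^{(\ell)})\leq n_\ell^{-1}\log|\Wc_e|\leq c_e$ holds exactly, so $n_\ell^{-1}\hv^{(\ell)}\in\Gamma_5$. For $\Gamma_6$, I would define
\[
\Gamma_{6}^{(\ell)}\triangleq\{\hv\in\Hc_N:\; h_{m_{\Sc}}+h_{\a}-h_{m_{\Sc},\a}\leq\e_\ell,\;\forall\,\a\in\Ac_s\},
\]
exactly as in the zero-error proof; the secrecy hypothesis gives $n_\ell^{-1}\hv^{(\ell)}\in\Gamma_6^{(\ell)}$. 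The new ingredient is the approximate version of $\Gamma_4$: by Fano's inequality applied to each $t\in\Tc$ and $s\in\b(t)$,
\[
H(M_s^{(\ell)}\mid W_{\In(t)}^{(\ell)})\leq 1+\e_\ell n_\ell r_{m_s}^{(n_\ell)},
\]
and a union bound over $s\in\b(t)$ plus the chain rule yield $H(M_{\b(t)}^{(\ell)}\mid W_{\In(t)}^{(\ell)})\leq\d_\ell'$ for some $\d_\ell'$ with $n_\ell^{-1}\d_\ell'\to 0$. I would therefore introduce
\[
\Gamma_4^{(\ell)}\triangleq\{\hv\in\Hc_N:\; h_{m_{\b(t)},\In(t)}-h_{\In(t)}\leq\d_\ell,\;\forall\,t\in\Tc\},
\]
with $\d_\ell\triangleq n_\ell^{-1}\d_\ell'\to 0$, so that $n_\ell^{-1}\hv^{(\ell)}\in\Gamma_4^{(\ell)}$.

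Combining the four memberships, and using convexity plus $\mathbf{0}\in\Gamma_N^{*}\cap\Gamma_{123}$ to place $n_\ell^{-1}\hv^{(\ell)}$ into $\overline{\rm con}(\Gamma_N^{*}\cap\Gamma_{123})$ (exactly the convexification step used in the proof of Theorem \ref{thm:outer_bd-variable}), one obtains
\[
n_\ell^{-1}\hv^{(\ell)}\in\overline{\rm con}(\Gamma_N^{*}\cap\Gamma_{123})\cap\Gamma_5\cap\Gamma_4^{(\ell)}\cap\Gamma_6^{(\ell)}.
\]
Projecting onto the source coordinates and applying $\Lambda(\cdot)$, together with $r_{m_s}^{(n_\ell)}\geq r_s-\e_\ell$, yields $\rv-\e_\ell\mathbf{1}\in\Rc_{\e,o}^{(\ell)}$, where $\Rc_{\e,o}^{(\ell)}$ is the analogue of $\Rc_{\e,o}$ with $\Gamma_4,\Gamma_6$ replaced by their $\ell$-relaxations. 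The final step is a limiting argument: since $\d_\ell,\e_\ell\downarrow 0$ monotonically and $\bigcap_{\ell'\geq\ell}\Rc_{\e,o}^{(\ell')}=\Rc_{\e,o}$, and since $\Rc_{\e,o}$ is closed, letting $\ell\to\infty$ delivers $\rv\in\Rc_{\e,o}$.

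The main obstacle, compared with the zero-error case, is handling $\Gamma_4$ approximately: one must verify that $n_\ell^{-1}H(M_{\b(t)}^{(\ell)}\mid W_{\In(t)}^{(\ell)})\to 0$ uniformly (which needs the boundedness of $r_{m_s}^{(n_\ell)}$, itself implied by $\Gamma_5$ and the acyclic network structure), and that the relaxed sets $\Gamma_4^{(\ell)}$ shrink to $\Gamma_4$ in a way compatible with the convex closure and projection. This is why $\Gamma_4$ moves \emph{outside} the convex closure in the statement of Theorem \ref{thm:outer_bd-fixed}, in contrast to Theorem \ref{thm:outer_bd-variable}: only constraints that hold exactly on $\hv^{(\ell)}$ (not merely in the limit) may be imposed before convexifying.
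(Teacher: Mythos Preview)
Your proposal is correct and follows essentially the same route as the paper's proof: define the entropy vector $\hv^{(\ell)}$ of a weakly-secure code, place it in $\Gamma_N^*\cap\Gamma_{123}$ exactly, in $\Gamma_5$ exactly via the fixed-length constraint, and in relaxed versions $\Gamma_4^{(\ell)},\Gamma_6^{(\ell)}$ via Fano and the secrecy hypothesis, then pass to the limit. The paper handles the Fano step by applying it directly to the joint reconstruction event and bounding $\sum_{s\in\b(t)} r_{m_s}^{(n_\ell)}$ by $c_M\triangleq\sum_{s\in\Sc}\sum_{e\in\Out(s)}c_e$, which is exactly the ``boundedness of $r_{m_s}^{(n_\ell)}$'' you flag as the main technical point.
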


Note that,  similar to $\Gamma_{12346}$,  $\Gamma_{1236}$ is also a linear subspace of $\Hc_N$. Therefore,  similar to Lemma \ref{lemma:1}, we have:
\begin{lemma}\label{lemma:2}
\[
\overline{\rm con}(\Gamma_{N}^{*}\cap \Gamma_{1236}) = \overline{D(\Gamma_{N}^{*}\cap \Gamma_{1236})}
\]
\end{lemma}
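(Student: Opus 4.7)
The plan is to mirror the argument sketched for Lemma \ref{lemma:1} (itself an extension of Lemma 21.8 in \cite{yeung}), exploiting the fact, already noted in the text, that $\Gamma_{1236}$ is a linear subspace of $\Hc_N$. Concretely, each defining relation of $\Gamma_1,\Gamma_2,\Gamma_3,\Gamma_6$ is a linear equality in the coordinates $h_\a$, so their intersection is an affine subspace through the origin, i.e.\ a linear subspace. This lets us replicate the classical proof verbatim with the extra bookkeeping that the subspace constraints are preserved by the operations used.

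The first step would be to establish two closure properties of $\Ac\triangleq \Gamma_N^{*}\cap \Gamma_{1236}$: (a) for any $\hv\in\Ac$ and any $n\in\mathds{N}$, one has $n\hv\in\Ac$, obtained by taking $n$ i.i.d.\ copies of the underlying random variables (so the new entropy vector is $n\hv\in\Gamma_N^{*}$) and noting that scaling preserves the linear equalities defining $\Gamma_{1236}$; and (b) for $\hv_1,\hv_2\in\Ac$, the sum $\hv_1+\hv_2\in\Ac$, obtained by forming the product of independent realizations (entropies add componentwise), again using linearity of the $\Gamma_{1236}$ constraints. With (a) and (b) in place, and with $\mathbf{0}\in\Ac$, the inclusion $\overline{D(\Ac)}\subseteq \overline{\rm con}(\Ac)$ is immediate because $D(\Ac)\subseteq {\rm con}(\Ac)$.

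For the reverse inclusion, I would argue that every convex combination of two elements of $\Ac$ lies in $\overline{D(\Ac)}$, and then extend to finite convex combinations and to their closure. Given $\hv_1,\hv_2\in\Ac$ and $\l\in[0,1]$, pick rationals $p_n/q_n\to\l$ with $p_n,q_n\in\mathds{N}$ and $0\leq p_n\leq q_n$, and write
\[
\frac{p_n}{q_n}\hv_1 + \frac{q_n-p_n}{q_n}\hv_2 \;=\; \frac{1}{q_n}\bigl(p_n\hv_1+(q_n-p_n)\hv_2\bigr).
\]
By (a) and (b) the vector in parentheses lies in $\Ac$, and since $1/q_n\in[0,1]$ the full expression lies in $D(\Ac)$. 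Passing to the limit gives $\l\hv_1+(1-\l)\hv_2\in\overline{D(\Ac)}$. An induction on the number of terms (or repeated application of the two-term case) extends this to arbitrary finite convex combinations, and taking closures yields $\overline{\rm con}(\Ac)\subseteq \overline{D(\Ac)}$, completing the proof.

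The only genuinely delicate point is verifying that properties (a) and (b) are compatible with each of the four linear constraints in $\Gamma_{1236}$; this is routine because the constraints are linear equalities among entropy coordinates, so they are preserved both by scaling by a positive integer and by componentwise addition (corresponding to independent combination of the underlying random tuples). There is no substantive obstacle beyond this bookkeeping; the absence of $\Gamma_4$ (reconstruction) and $\Gamma_5$ (edge-capacity) in the index set is in fact what makes $\Gamma_{1236}$ a linear subspace and lets the argument go through exactly as for Lemma \ref{lemma:1}.
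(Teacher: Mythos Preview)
Your proposal is correct and is precisely the standard argument underlying Lemma~21.8 of \cite{yeung}, which is all the paper invokes here: it gives no separate proof of Lemma~\ref{lemma:2}, merely observing that $\Gamma_{1236}$ is a linear subspace of $\Hc_N$ and appealing to the same extension used for Lemma~\ref{lemma:1}. Your explicit verification of properties (a) and (b) and the rational-approximation step faithfully unpacks that reference, so there is nothing to add.
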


\subsection{Proof of Theorem \ref{thm:inner_bd-fixed}}
Similar to the proof of Theorem \ref{thm:inner_bd-fixed}, by Lemma \ref{lemma:2},  it suffices to show that  $\Rc'_{\e,i}\triangleq{\rm Proj}_{\Sc}(\overline{D(\Gamma_{N}^{*}\cap \Gamma_{1236})}\cap\Gamma_{45})\subseteq \Rc_{\e}^{(w)}$.

For any $\rv\in\Rc'_{\e,i}$, by definition of $\Rc'_{\e,i}$, there exists  sequences $\hv^{(k)}\in \Gamma_{N}^{*}\cap \Gamma_{123456}$,  $a_k\in(0,1]$, and $\e_k\in(0,1]$, $k=1,2,\ldots$, such that, for any $s\in\Sc$,
\begin{align}
a_k h^{(k)}_{m_s} \geq r_s-\e_k,\label{eq:source rate constraint-var}
\end{align}
for each edge $e\in\Ec$,
\begin{align}
a_kh^{(k)}_e\leq c_e+\e_k,
\end{align}
for each terminal $t\in\Tc$,
\begin{align}\label{eq:ce}
h^{(k)}_{m_{\b(t)},\In(t)}-h^{(k)}_{\In(t)}\leq \e_k,
\end{align}
and $\e_k\to 0$, as $k$ grows to infinity.

Similar to the proof of Theorem \ref{thm:inner_bd-variable},  we build a sequence of weakly secure network codes with asymptotically zero probability of error based on the sequence of entropic vectors  $\hv^{(k)}$.

Since $\hv^{(k)}\in\Gamma_{N}^{*}\cap \Gamma_{1236}$, there exists a set of finite-alphabet random variables random variables, $\{U_{m_s},U_{k_s}: s\in\Sc\}\cup\{U_e\; : \: e\in\Ec\}$, which satisfy conditions (i), (ii) and (iii) corresponding to equations \eqref{eq:i}, \eqref{eq:ii} and \eqref{eq:iii}, respectively, which are listed in the proof of Theorem \ref{thm:inner_bd-variable}. The only difference is that since $\hv^{(k)}$ is no longer in $\Gamma_4$, the  condition stated in (iv), which requires reconstruction of source variables of interest at each sink does not hold in this case.

\subsubsection{Random code construction}

Consider building the source and key codebooks according to the construction described in the proof of Theorem \ref{thm:inner_bd-variable}, \ie for $s\in\Sc$,  let $\Mc_s\triangleq \Tc_{\e}^{(n_t)}(U_{m_s})$, and $\Kc_s\triangleq \Tc_{\e}^{(n_t)}(U_{k_s})$, where $|\Mc_s|\triangleq2^{nr^{(k)}_{m_s}}$ and $|\Kc_s|\triangleq2^{nr^{(k)}_{k_s}}$. As before,  $n$ denotes the code's blocklength.

Source $s\in\Sc$ picks message $M_s$ and key $K_s$ uniformly at random from $\Mc_s$ and $\Kc_s$, respectively. Edge $e\in\Out(s)$, applies the function $\eta_e$, which is specified from  by the fact that $\hv^{(k)}\in\Gamma_2$,  symbol-by-symbol to $(U_{m_s}^{n_t},U_{k_s}^{n_t})$, the source builds  codeword $W_e=U_e^{n_t}$. In other words,  for $e\in\Out(s)$, and $i\in[1:n_t]$,
\begin{align}
U_{e,i}=\eta_e(U_{m_s,i},U_{k_s,i}).
\end{align}

After building the message and key codebooks in this manner, the encoding operations are performed as follows.  Each source $s\in\Sc$ chooses message $M_s$ and key $K_s$ uniformly at random from sets $\Mc_s$ and $\Kc_s$, respectively. Edge $e\in\Out(s)$ applies function $\eta_e$ symbol-by-symbol to $(M_s,K_s)=(U_{m_s}^{n_t},U_{k_s}^{n_t})$ and generates $U_e^{n_t}$, \ie $U_{e,i}=\eta_e(U_{m_s,i},U_{k_s,i})$, for $i\in[1:n_t]$.  Unlike the case of variable length coding, where always $W_e$ is equal to $U_e^{n_t}$, in this case, if $(U_{m_s}^{n_t},U_{k_s}^{n_t})\in\Tc_{\e}^{(n_t)}(U_{m_s},U_{k_s})$, then $W_e=U_e^{n_t}$; Otherwise, $W_e=0$. Note that by Lemma \ref{lemma:typical-g}, if $(U_{m_s}^{n_t},U_{k_s}^{n_t})\in\Tc_{\e}^{(n_t)}(U_{m_s},U_{k_s})$, then $U_e^{n_t}\in\Tc_{\e}^{(n_t)}(U_e)$ as well.

Similarly  node $v\in\Vc\backslash(\Sc\cup\Tc)$ constructs $W_e$, $e\in\Out(v)$, as a function of $(W_{e'}: e'\in\In(v))$ as follows. If there exists an incoming message in the received messages of node $v$, which is equal to $0$, then let $W_e=0$. Otherwise, node $v$ constructs $U_e^{n_t}$, such that $U_{e,i}=\eta_e(U_{e',i}:\; e'\in\In(v))$, $i\in[1:n_t]$. If $U_{\In(v)}^{n_t}\in\Tc_{\e}^{(n_t)}(U_{\In(v)})$, then $W_e=U_e^{n_t}$; Otherwise, $W_e=0$. Again, by Lemma \ref{lemma:typical-g}, if  $U_{\In(v)}^{n_t}\in\Tc_{\e}^{(n_t)}(U_{\In(v)})$ then  $U_e^{n_t}\in\Tc_{\e}^{(n_t)}(U_e)$ as well.

Finally, sink node $t\in\Tc$ reconstructs the message $M_s$, $s\in\b(t)$, as a function of $(W_e: e\in\In(t))$. Let  $\hat{M}_{s\to t}$ denote the reconstruction of message $M_s$ at $t$. If at least one of the incoming messages of node $v$ is $0$, then it lets $\hat{M}_{s\to t}$ equal to an arbitrary fixed message from $\Mc_s$.
Otherwise, it looks for $U_{m_s}^{n_t}\in\Mc_s$ such that
\[
(U_{m_s}^{n_t},(U_e^{n_t})_{e\in\In(t)})\in\Tc_{\e_t}^{(n_t)}(U_{m_s},U_{\In(t)}),
\]
and lets $\hat{M}_{s\to t}=U_{m_s}^{n_t}$.  If there is no such message, again it lets $\hat{M}_{s\to t}$ equal to an arbitrary predetermined message from $\Mc_s$.

Similar to \eqref{eq:n}, blocklength $n$ is defined as
\begin{align}
n \triangleq \left\lceil{n_t(1+\delta)\over a_k}\right\rceil,
\end{align}
where
\begin{align}
\delta \triangleq 2\max_{e\in\Ec}\left({(1+\e)(1+{\e_kc_e^{-1}})\over 1-\e c_e^{-1}}\right)-2.\label{eq:delta-2}
\end{align}
Again, for fixed $\e_k$, choosing $n_t$ large enough and $\e$ small enough, $\d$ can be made arbitrarily small.

\subsubsection{Performance evaluation}

\begin{enumerate}

\item[i.] Source rates:
Since $\Mc_s= \Tc_{\e}^{(n_t)}(U_{m_s})$,  for $n_t$  large enough, from \eqref{eq:size-typical-set} it follows that
 \begin{align}
r^{(k)}_{m_s}&={\log| \Tc_{\e}^{(n_t)}(U_{m_s})|\over n}\nonumber\\
&  \geq {n_t(1-\e)\over n_t(1+\delta)+a_k}(r_s-\e_k)+ {a_k\log (1-\e)\over n_t(1+\delta)+a_k}.\label{eq:lower-bd-rs-2}
 \end{align}
Again tuning the parameters $\e_k$, $\e$ and $n_t$ appropriately, we can make the right hand side of \eqref{eq:lower-bd-rs-2} arbitrarily close to $r_s$.

\item[ii.] Channel capacity constraints: In this case, we have assumed fixed length coding. Hence, to prove that channel capacity  constraints are satisfied, it suffices to show that $n^{-1}\log|\Wc_e|\leq nc_e$. But, $\Wc_e=\Tc_{\e}^{(n_t)}(U_e)\cup\{0\}$. Therefore,
\begin{align}
n^{-1}\log|\Wc_e|&=n^{-1}\log(|\Tc_{\e}^{(n_t)}(U_e)|+1)\nonumber\\
&\stackrel{(a)}{\leq} n^{-1}\log|\Tc_{\e}^{(n_t)}(U_e)| +n^{-1}|\Tc_{\e}^{(n_t)}(U_e)|^{-1}\nonumber\\
&\leq {a_k(1+\e)\over 1+\d} h_e^{(k)}+\e\nonumber\\
&\leq ({1+\e\over 1+\d} )(c_e+\e_k)+\e\nonumber\\
&\stackrel{(b)}{\leq} c_e,
\end{align}
where  (a) follows from $\log(1+x)\leq x$, and (b) follows from \eqref{eq:delta-2}.

\item[iii.] Security guarantee: We  prove  that, for any $\a\in\Ac_s$, $n^{-1} I(M_{\Sc};W_{\a})$ can be made arbitrarily small, as blocklength grows to infinity. Note that the  difference between the coding performed on the edges in this case and the case studied in the proof of Theorem \ref{thm:inner_bd-variable} is that, here,  codeword $W_e$ sent over edge $e$ can sometimes be  empty . This happens when either one of the incoming messages to the tail of $e$ is empty, or $U_{\In({\rm tail(e)})}^{n_t}$   is not typical.  In fact the set of messages carried by the edges in this case can be viewed as  the previous set of messages corrupted by an erasure channel with memory; If an input message of a node is empty, then all the messages carried by its descendants  are zero as well.

Let
\[
I_e \triangleq \ind_{W_e\neq 0},
\]
\ie $I_{e}$ is an indicator function that shows whether  the messages carried by edge $e$ is empty or not.  Note that $W_{e}$ is a function of $(U_{e}^{n_t},I_{e})$: $W_{e}=U_{e}^{n_t}$, if $I_e=1$, and  $W_e=0$, otherwise. Therefore,
\begin{align}
n^{-1} I(M_{\Sc};W_{\a})&\leq n^{-1} I(M_{\Sc};U^{(n_t)}_{\a},I_{\a})\nonumber\\
&\leq n^{-1} I(M_{\Sc};U_{\a}^{n_t})+n^{-1} I(M_{\Sc};I_{\a}|U_{\a}^{n_t})\nonumber\\
&\leq n^{-1} I(M_{\Sc};U_{\a}^{n_t})+n^{-1}\log |\a|.
\end{align}
In the proof of Theorem \ref{thm:inner_bd-variable}, we showed that  $n^{-1} I(M_{\Sc};U_{\a}^{n_t})$ can be made arbitrarily small. Since $\max_{\a\in\Ac_s} |\a|$  is fixed and does not grow with $n$, it follows that we can make $n^{-1} I(M_{\Sc};W_{\a})$ arbitrarily small for any $\a\in\Ac_s$.

\item[iv.] Probability of error: Proving that the  probability of error converges  to zero as $n$ grows to infinity is involved, but follows from the straightforward extension of the analysis presented  in \cite{YanY:12}, and we skip that here.

\end{enumerate}

\subsection{Proof of Theorem \ref{thm:outer_bd-fixed}}

Let $\rv\in\Rc_{\e}^{(w)}$. We show that $\rv\in\Rc_{\e,o}$. Since $\rv\in\Rc_{\e}^{(w)}$, there exists a family of weakly secure network codes of increasing blocklength $n_{\ell}$ with message rates $\rv^{(n_{\ell})}_{m_{\Sc}}$ and key rates $\rv^{(n_{\ell})}_{k_{\Sc}}$, and a sequence $(\e_{\ell}: \ell=1,2,\ldots)$,  monotonically converging to zero, such that
\begin{align}
r_{m_s}^{(n_{\ell})}\geq r_s-\e_{\ell},\label{eq:source-rate-lb-eps}
\end{align}
for all  $s\in\Sc$,
\begin{align}\label{eq:sec-const-eps}
{1\over n_{\ell}} I(M_{\Sc};W_{\a}) \leq \e_{\ell},
\end{align}
for all $\a\in\Ac_s$, and
\begin{align}\label{eq:p-error-eps}
\P(\hat{M}_{s\to t}&\neq M_{s}, {\rm for}\;{\rm some}\;t\in\Tc,\;s\in\b(t)) \leq \e_{\ell}.
\end{align}

For blocklength $n_{\ell}$, let $(M^{(\ell)}_s,K^{(\ell)}_s)_{s\in\Sc}$ and $(W^{(\ell)}_e)_{e\in\Ec}$ denote its set of source messages, keys, and edge messages, respectively. Also, corresponding to  each $\ell$, define $N$ random variables $(Y^{(\ell)}_{m_s})_{s\in\Sc}, (Y^{(\ell)}_{k_s})_{s\in\Sc}$ and $(Y^{(\ell)}_{e})_{e\in\Ec}$ as \eqref{eq:Yms}, \eqref{eq:Yks}, and \eqref{eq:Ye}, and similarly let $\hv^{(\ell)}$ denote the entropy function of these $N$ random variables. By the same arguments presented in the proof of Theorem \ref{thm:outer_bd-variable},
\[
 n_{\ell}^{-1}\hv^{(\ell)} \in   \overline{\rm con}(\Gamma_{N}^{*}\cap \Gamma_{123}).
\]

In the fixed-length channel coding paradigm, the Message $W^{(\ell)}_e$ transmitted on edge $e$ satisfies $\log |\Wc^{(\ell)}_e|\leq n_{\ell}c_e$. Therefore, $H(\Wc^{(\ell)}_e)=\log |\Wc^{(\ell)}_e|\leq n_{\ell}c_e$. Hence, $n_{\ell}^{-1}\hv^{(\ell)}\in\Gamma_5$, and
\begin{align}\label{eq:step1-eps}
n^{-1}_{\ell}\hv^{(\ell)}\in\overline{\rm con}(\Gamma_{N}^{*}\cap \Gamma_{1234})\cap\Gamma_5.
\end{align}

For every sink node $t\in\Tc$, combining Fano's inequality and \eqref{eq:p-error-eps} yields
\begin{align}
H(M^{(\ell)}_{\b(t)}|W^{(\ell)}_{\In(t)})&\leq 1+n_{\ell}(\sum_{s\in\b(t)}r_{m_s}^{(n_{\ell})})\P(\hat{M}_{s\to t}\neq M_{s}, {\rm for}\;{\rm some}\;s\in\b(t))\nonumber\\
&\leq 1+n_{\ell}(\sum_{s\in\b(t)}r_{m_s}^{(n_{\ell})})\e_{\ell},\nonumber
\end{align}
or dividing both sides by $n_{\ell}$
\begin{align}
n_{\ell}^{-1}H(M^{(\ell)}_{\b(t)}|W^{(\ell)}_{\In(t)}) &\leq n_{\ell}^{-1}+ \sum_{s\in\b(t)}r_{m_s}^{(n_{\ell})}\e_{\ell}.\label{eq:bound-fano-conditional}
\end{align}

For any source $s\in\Sc$, if $r_{m_s}^{(n_{\ell})}\geq \sum_{e\i\Out(s)}c_e$, then the probability of error gets arbitrarily close to one, for $n$ large enough. Hence, without loss of generality, we assume that, the rate of source $s\in\Sc$ does not exceed its outgoing  edges  sum capacity. Then,
\begin{align}
\sum_{s\in\b(t)}r_{m_s}^{(n_{\ell})}&\leq \sum_{s\in\Sc}r_{m_s}^{(n_{\ell})}\nonumber\\
&\leq \sum_{s\in\Sc}\sum_{e\in\Out(s)}c_e\nonumber\\
&\triangleq c_M.\label{eq:cM}
\end{align}
Combining \eqref{eq:bound-fano-conditional} and \eqref{eq:cM}, it follows that
\begin{align}
n_{\ell}^{-1}H(M^{(\ell)}_{\b(t)}|W^{(\ell)}_{\In(t)}) &=h_{m_{\b(t)},\In(t)}- h_{\In(t)}\nonumber\\
&\leq n_{\ell}^{-1}+c_M\e_{\ell},
\end{align}
for every $t\in\Tc$.

Given the sequence, $(\e_{\ell}: \ell=1,2,\ldots)$, define the two sequences of sets as
\begin{align}
\Gamma_{4}^{(\ell)} \triangleq \big\{\hv\in\Hc_{N}:\;  h_{m_{\b(t)},\In(t)}- h_{\In(t)}\leq n_{\ell}^{-1}+c_M\e_{\ell}, \;\forall\;t\in\Tc\},\label{eq:gamma4}
\end{align}
and
\begin{align}
\Gamma_{6}^{(\ell)} \triangleq \big\{\hv\in\Hc_{N}:\;  h_{m_{\Sc}}+h_{\a}-h_{m_{\Sc},\a} \leq \e_{\ell}, \; \forall\a\in\Ac_s \}.\label{eq:gamma6}
\end{align}
Note that $n_{\ell}^{-1}\hv^{(\ell)}\in\Gamma_{4}^{(\ell)}\cap\Gamma_{6}^{(\ell)}$. Therefore, from  \eqref{eq:step1-eps}, it follows that
\begin{align}
n_{\ell}^{-1}\hv^{(\ell)}\in\overline{\rm con}(\Gamma_{N}^{*}\cap \Gamma_{123})\cap\Gamma_4^{(\ell)}\cap \Gamma_5\cap \Gamma_6^{(\ell)}.\label{eq:outer-bd-1}
\end{align}

The rest of the proof can be done similar to the last steps of the proof of Theorem \ref{thm:outer_bd-variable}. Let
\begin{align}
\Rc_{\e,o}^{(\ell)}\triangleq\Lambda( {\rm Proj}_{\Sc}(\overline{\rm con}(\Gamma_{N}^{*}\cap \Gamma_{123})\cap\Gamma_4^{(\ell)}\cap \Gamma_5\cap \Gamma_6^{(\ell)})).\label{eq:outer-bd-2-eps}
\end{align}
By \eqref{eq:source-rate-lb-eps}, ${n_{\ell}^{-1}}h^{(\ell)}_s\geq r_s-\e_{\ell}$ for all $s\in\Sc$. Combining this with \eqref{eq:outer-bd-1} and \eqref{eq:outer-bd-2-eps}, yields
\[
\rv-\e_{\ell}\in\Rc_{o}^{(\ell)}.
\]
By the same argument we had before,
\[
\rv-\e_{\ell}\in\bigcap\limits_{\ell'=\ell}^{\infty}\Rc_{\e,o}^{(\ell')}.
\]
On the other hand, for any $\ell\geq 1$,
\[
\bigcap\limits_{\ell'=\ell}^{\infty}\Rc_{\e,o}^{(\ell')}=\Rc_{\e,o}.
\]
Therefore,  $\rv-\e_{\ell}\in \Rc_o$, for $\ell\geq 1$. Since $\Rc_o$ is a closed set, $\rv\in\Rc_o$ as well.

\section{Conclusion}\label{sec:conclusion}

In this paper, we considered the problem of multi-source multi-sink communication over  wiretap networks.  We studied both zero and asymptotically zero probability of  reconstruction error, and  proved that in the case of asymptotically zero error,  the capacity region  subject to weak  secrecy requirement is equal to the capacity region subject to strong secrecy requirement. In fact we  proved this equivalence  for general multi-source multi-destination wireless networks modeled by discrete memoryless channels. This result implys that equivalence models derived previously for weak secrecy \cite{DikaliotisY:12,Ted_thesis} also hold under strong secrecy constraint.  Both for zero-error and asymptotically zero-error communication, we  derived inner and outer bounds on the capacity region subject to weak secrecy constraint   in terms of the intersection of the entropic region and some hyperplanes defined by the network's constraints.

\section*{Acknowledgment}
This work is partially supported by NSF grant CNS-0905615.

\renewcommand{\theequation}{A-\arabic{equation}}
\setcounter{equation}{0}  

\section*{APPENDIX A: Proof of Theorem \ref{thm:strong_eq_weak}}  \label{app1} 

We start by a weakly secure code and apply it multiple times. Then  we use extractor functions to build a strongly secure  code. The difference with \cite{MaurerW:00} is that here we have multiple sources.

Clearly, $\Rc^{(w)}(p(y^m|x^m),\Sc,\Ac)\subseteq\Rc^{(s)}(p(y^m|x^m),\Sc,\Ac)$. We only need to prove that
\[
\Rc^{(s)}(p(y^m|x^m),\Sc,\Ac)\subseteq\Rc^{(w)}(p(y^m|x^m),\Sc,\Ac).
\]
To achieve this goal, we start by a weakly secure code, and apply the same code $L$ times. Then, employing extractor functions to the concatenation of the messages, we construct a strongly secure code.

Let $\rv=(r_s: s\in\Sc)\in\Rc^{(w)}(p(y^m|x^m),\Sc,\Ac)$. By definition, for any $\e>0$ and $n$ large enough, there exists a weakly secure code of blocklength $n$ and source coding rates $(r_s^{(n)}: s\in\Sc)$, such that $r^{(n)}_s\geq r_s-\e$, for every $s\in\Sc$,  $\P(M_s\neq \Mh_{s\to t})\leq \e$, for $s\in\Sc$ and $t\in\Dc_s$, and $n^{-1}I(Y_{\a}^n;M_{\Sc})\leq \e$, for all $\a\in\Ac$.

Consider applying this code $L$ times. For $s\in\Sc$, let $M_{s,\ell}$ and $\Mh_{s\to t,\ell}$ denote the message transmitted by source $s$ and its reconstruction at node $t\in\Dc_s$, respectively, at session $\ell\in[1:L]$.
By the union bound, $\P(\hat{M}_{s\to t}^L\neq M_s^L)\leq \e L$.  Consider modeling  the problem as a problem of source coding with side information. Source $s$ has access to i.i.d.~ samples of $M_s$ and terminal $t$ observes correlated side information $\Mh_{s\to t}$. By the Fano's inequality  \cite{cover}, $H(M_s|\Mh_{s\to t})\leq nr_{s}^{(n)}\e+1$. Hence, by the Slepian-Wolf coding theorem \cite{SlepianW:73}, if $L$ is large enough, for any $\e_1>0$, source $s$ by sending $LH(M_s|\Mh_{s\to t}) \leq L(nr_{s}^{(n)}\e+1)$ extra bits to all terminal  in $\Tc_s\triangleq\{t: s\in\Dc_s\}$, can ensure that
\[
\P(\Mt_{s\to t}^L\neq M_s^L)\leq \e_1,
\]
where $\Mt_{s\to t}^L$ denotes the reconstruction of $M_s^L$ at terminal $t$ as a function of $\Mh_{s\to t}^L$ and the extra information received from the Slepian-Wolf coding sessions.  Let $O_s\in{\cal O}_s$ denote the message sent from source $s$ to all terminals $t$ such that $s\in\b(t)$. Note that $\log |{\cal O}_s|\leq L(nr_{s}^{(n)}\e+1)$.

For  $\e_2>0$, define event $\Bc$ as
\[
\Bc\triangleq \{(M^L_{\Sc},Y^{nL}_{\a})\in\Tc_{\e_2}^{(L)}(M_{\Sc},Y^n_{\a}): {\rm for}\;{\rm all}\;\a\in\Ac\}.
\]
By the Hoeffding's inequality \cite{Hoeffding}, for $(m_{\Sc},y^{n}_{\a})\in\prod_{s\in\Sc}\Mc_{s}\times \prod_{i\in\a}\Yc_i^n$,
\begin{align}
\P&\left(|\pi(m_{\Sc},y^{n}_{\a}|M^L_{\Sc},Y^{nL}_{\a})-p(m_{\Sc},y^n_{\a})|\leq \e_2 p(m_{\Sc},y_{\a}^n)\right) \nonumber\\
&\geq 1- 2^{-2L\e_2^2p(m_{\Sc},y^n_{\a})+1}.\label{eq:hoeffding}
\end{align}
Combining \eqref{eq:hoeffding} with the union bound yields
\begin{align}
\P(\Bc^c)&\leq \sum_{\a\in\Ac}\sum_{\substack{(m_{\Sc},y^n_{\a})\in \\ \prod\limits_{s\in\Sc}\Mc_{s}\times \prod\limits_{i\in\a}\Yc_i^n}}2^{-2L\e_2^2p(m_{\Sc},y^n_{\a})+1}\nonumber\\
&\leq |\Ac|\Big(2^{n\sum\limits_{s\in\Sc}r_{s}^{(n)} }\prod_{i\in\Vc}|\Yc_i|^n\Big) 2^{-2L\e_2^2p^*+1},\nonumber
\end{align}
where $p^*\triangleq\min_{\a\in\Ac}\min_{(m_{\Sc},y^n_{\a})} p(m_{\Sc},w_{\a})$. Note that  $p^*>0$ and does not depend on $L$. Hence,
\begin{align}
\P(\Bc^c)\leq 2^{-\g L+\eta},
\end{align}
where $\g\triangleq 2\e_2^2p^*$, $p^*\triangleq\min_{\a\in\Ac}\min_{(m_{\Sc},y^n_{\a})} p(m_{\Sc},y^n_{\a})>0$, and $\eta\to 0$ as $L\to \infty$.

Since the original code is assumed to be weakly secure, for any $\a\in\Ac$, $I(M_{\Sc};Y^n_{\a})<\e n$. Therefore, for any $s\in\Sc$, $H(M_s|Y^n_{\a},M_{\Sc\backslash s})\geq nr^{(n)}_{s}-n\e$, which due to the independence of the messages yields
\[
H(M_s|Y^n_{\a},M_{\Sc\backslash s})\geq nr_{s}^{(n)}-n\e.
\]
Also, for any $\a\in\Ac$, $y^{nL}_{\a}\in\prod_{i\in\a}\Yc_i^{nL}$, if $(m_{\Sc}^L,y_{\a}^{nL})\in\Tc_{\e_2}^{(L)}(M_{\Sc},Y_{\a}^n)$, then
\[
p(m_s^L|m_{\Sc\backslash s}^L,y_{\a}^{nL})\leq 2^{-L(1-\e_2)H(M_s|M_{\Sc\backslash s},Y_{\a}^{n})}.
\]
Therefore,
\begin{align}
&H_{\infty}(M_s^L|M^L_{\Sc\backslash s}=m^L_{\Sc\backslash s},Y_{\a}^{nL}=y^{nL}_{\a},\Bc)\nonumber\\
&\geq (1-\e_2)LH(M_s|M_{\Sc\backslash s},Y^n_{\a})\nonumber\\
& \geq (1-\e_2)Ln(r_{s}^{(n)}-\e).
\end{align}

The error correcting messages $O_{\Sc}=(O_s: s\in\Sc)$ are sent to the desired users  without  additional  coding to prevent information leakage to  the adversary.   The adversary  observes random variables correlated by  $O_{\Sc}$  through outputs $Y_{\a}$.  By the data processing inequality, the worst case performance from the vewipoint of security  is when the adversary observes  the we messages $O_{\Sc}=(O_s: s\in\Sc)$ instead of their noisy version. For $(m^L_{\Sc\backslash s},y_{\a}^{nL})\in\Tc_{\e_2}^{(L)}(M_{\Sc},Y^n_{\a})$, by Lemma \ref{lemma:H-infty}, for any $\l>0$, with probability exceeding $1-2^{-\l}$,
\begin{align}
&H_{\infty}(M_s^L|M^L_{\Sc\backslash s}=m^L_{\Sc\backslash s},Y_{\a}^{nL}=y_{\a}^{nL},O_{\Sc}=o_{\Sc},\Bc)\nonumber\\
&\geq (1-\e_2)Ln(r_{s}^{(n)}-\e)-\sum_{s\in\Sc}\log|\Oc_s|-\l \nonumber\\
&\geq (1-\e_2)Ln(r_{s}^{(n)}-\e) -L(n\sum_{s'\in\Sc}r_{s'}^{(n)}\e+|\Sc|)-\l\nonumber\\
&= Lnr^{(n)}_{s}(1-\e_3),\label{eq:H-inf-lb}
\end{align}
where $\e_3\triangleq1-(1-\e_2)(1-\e/r_s^{(n)})-(\e\sum_{s'}r^{(n)}_{s'}-n^{-1})/r^{(n)}_{s}-\l/(Lnr_{s}^{(n)})$, and can be made arbitrarily small by choosing $L$ and $n$ large enough and $\e$ and $\e_1$ appropriately small. Let $\l\triangleq n+ \lceil \log L \rceil$, and define event $\Hc_s$ as the event  \eqref{eq:H-inf-lb} holds. Note that $\P(\Hc_s)\geq 1-2^{-n}L^{-1}$.

As mentioned before, to convert the weakly secure code into a strongly secure one, we apply Lemma \ref{lemma:8-MU}. Let $V_s$ be an independent random variable generated at source $s\in\Sc$, such that $\Vc_s=\{0,1\}^{n_{2,s}}$. By Lemma \ref{lemma:8-MU} and \eqref{eq:H-inf-lb}, for each source $s\in\Sc$ and any given $\d_1,\d_2>0$, there exists an extractor function $E_s:\{0,1\}^{Ln r^{(n)}_{s}}\times \{0,1\}^{n_{s,2}}\to \{0,1\}^{n_{s,3}}$, such that $n_{s,2}\leq \d_1Ln r^{(n)}_{s}$, $n_{s,3}\geq (1-\e_3-\d_2)Ln r^{(n)}_{s}$,
\begin{align}
&H(E_s(M_s^L,V_s)|Y_{\a}^{nL}=y_{\a}^{nL},M^L_{\Sc\backslash s}=m^L_{\Sc\backslash s},\Oc_{\Sc}=o_{\Sc},V_s,\Bc,\Hc_s)\nonumber\\
&\geq n_{s,3}-2^{-\sqrt{Lnr^{(n)}_{s}}-\d_s},\label{eq:lower-db-Es}
\end{align}
where $\d_s=o(1)$. Let $n_{s,3}\triangleq (1-\d_4)Ln r^{(n)}_{s}$.

Let $\bar{M}_s\triangleq E_s(M_s^L,V_s)$. From the definition of $E_s$, $\bar{M}_s\in\bar{\Mc}_s\triangleq [1:2^{n_{s,3}}]$, and
\begin{align}
&H(\bar{M}_{\Sc}|Y^{nL}_{\a},O_{\Sc},V_{\Sc}){\geq}  \sum_{s\in\Sc}H(\bar{M}_s|\Mt_{\Sc\backslash s},Y^{nL}_{\a},O_{\Sc},V_{\Sc})\nonumber\\
&\stackrel{(a)}{\geq}  \sum_{s\in\Sc}H(\bar{M}_s|M_{\Sc\backslash s}^L,Y^{nL}_{\a},O_{\Sc},V_{\Sc})\nonumber\\
&\stackrel{(b)}=  \sum_{s\in\Sc}H(\bar{M}_s|M_{\Sc\backslash s}^L,Y^{nL}_{\a},O_{\Sc},V_s)\nonumber\\
&{\geq}  \sum_{s\in\Sc}H(\bar{M}_s|M_{\Sc\backslash s}^L,Y^{nL}_{\a},O_{\Sc},V_s,\ind_{\Bc\cap\Hc_s})\nonumber\\
&\geq\sum_{s\in\Sc}H(\bar{M}_s|M^L_{\Sc\backslash s},Y^{nL}_{\a},O_{\Sc},V_s,\Bc,\Hc_s)\P(\Bc\cap\Hc_s)\label{eq:M-bad-s-lb}
\end{align}
where (a)  holds because $\bar{M}_s$ is a deterministic function of $(M_s^L,V_s)$, and (b) holds since $V_{\Sc\backslash s}$ is independent of  $(\bar{M}_s,M_{\Sc\backslash s}^L,Y^{nL}_{\a}$, $O_{\Sc},V_s)$. Since $\P(\Bc\cap\Hc_s)\geq 1- \P(\Bc^c)- \P(\Hc_s^c)$, it follows from \eqref{eq:M-bad-s-lb} that
\begin{align}
&H(\bar{M}_{\Sc}|Y^{nL}_{\a},O_{\Sc},V_{\Sc})\nonumber\\
&\geq\sum_{s\in\Sc}H(\bar{M}_s|M^L_{\Sc\backslash s},Y^{nL}_{\a},O_{\Sc},V_s,\Bc,\Hc_s)(1-2^{-n}L^{-1}-2^{-\g L+\eta})\nonumber\\
						   &\stackrel{(a)}{\geq}\sum_{s\in\Sc}(n_{s,3}-2^{-\sqrt{Lnr_s^{(n)}}-\d_s})(1-2^{-n}L^{-1}-2^{-\g L+\eta})\nonumber\\
						   &\geq \sum_{s\in\Sc}n_{s,3}-\e_4,\label{eq:22}
\end{align}
where (a) follows from \eqref{eq:lower-db-Es}, and $\e_4$ can be made arbitrarily small. Following the same steps used in the above equations for each $s\in\Sc$ yields
$H(\bar{M}_s|Y^{nL}_{\a},O_{\Sc},V_{\Sc})\geq n_{s,3}-\e_4$, and therefore, $H(\bar{M}_s)\geq H(\bar{M}_s|Y^{nL}_{\a},O_{\Sc},V_{\Sc})\geq n_{s,3}-\e_4$. Let $p_{u_s}$  denote the pdf corresponding to the uniform distribution over $\bar{\Mc}_s$. Then, $D(p_{\bar{m}_s}\|p_{u_s})= n_{s,3}-H(M_s)\leq \e_4$, and by Pinsker's inequality \cite{cover}, $d_{\rm TV}(p_{\bar{m}_s},p_{u_s})\leq \sqrt{0.5\e_4}$. Therefore, by the maximal coincidence theorem (Theorem 1.1. in Chapter 4 of \cite{Bremaud:99-book}), there exists a joint distribution $p_{u_{s},\bar{m}_s}$, such that if $(U_s,\bar{M}_s)\sim p_{u_{s},\bar{m}_s}$, then $\P(U_s\neq \bar{M_s}) = d_{\rm TV}(p_{\bar{m}_s},p_{u_s}) \leq \sqrt{0.5\e_4}$. Given $\bar{M}_s$, let $U_s$, the message at source $s$,  be the random variable generated from the conditional distribution $p_{u_s|\bar{m}_s}$. Since, for any $\a\in\Ac_s$, $U_{\Sc}\to\bar{M}_{\Sc}\to W_{\a}$, therefore, by the data processing inequality \cite{cover},
\[
I(U_{\Sc};Y^{nL}_{\a}) \leq I(\bar{M}_{\Sc};Y^{nL}_{\a})\leq \e_4,
\]
where the last step follows from \eqref{eq:22}.

Finally, we need to compute the effect of the extra communication required for carrying messages $(O_s: s\in\Sc)$ and $(V_s: s\in\Sc)$. To achieve this goal, we model the correlation between $M_s$ and $\{\Mh_{s\to t}\}_{t\in\Dc_s}$ as a broadcast channel. Since $\P(M_s\neq \Mh_{s\to t}, {\rm for}\;{\rm some}\; t\in\Dc_s)\leq |\Dc_s|\e$, and
\[
\P(M_s\neq \Mh_{s\to t}, {\rm for}\;{\rm some}\; t\in\Dc_s)=\sum_{m_{\Sc\backslash s}\in\prod_{v\in\Sc\backslash s}\Mc_{v}}\P(M_s\neq \Mh_{s\to t}, {\rm for}\;{\rm some}\; t\in\Dc_s|M_{\Sc\backslash s}=m_{\Sc\backslash s})p(m_{\Sc\backslash s}),
\]
there exists $m^*_{\Sc\backslash s}\in\prod_{v\in\Sc\backslash s}\Mc_{v}$, such that $\P(M_s\neq \Mh_{s\to t}, {\rm for}\;{\rm some}\; t\in\Dc_s|M_{\Sc\backslash s}=m^*_{\Sc\backslash s})\leq |\Dc_s|\e$. Again by the Fano's inequality, since the messages are independent,
\begin{align}
I(M_s;\Mh_{s\to t}|M_{\Sc\backslash s}=m^*_{\Sc\backslash s})&\geq nr_s^{(n)}(1-|\Dc_s|\e)-1\nonumber\\
&\geq n(r_s-\e)(1-|\Dc_s|\e)-1,\label{eq:mutual-info-Ms}
\end{align}
for all $t\in\Dc_s$.  Fixing the input messages of nodes in $\Sc\backslash s$ to $m^*_{\Sc\backslash s}$, consider the broadcast channel from node $s$ to  nodes in  $\Dc_s$, described by $p((\mh_{s\to t})_{ t\in\Dc_s}|m_{s}, M_{\Sc\backslash s}= m_{\Sc\backslash s})$.  Since \eqref{eq:mutual-info-Ms} holds for every $t\in\Dc_s$, the common rate capacity of this cannel is at least $n(r_s-\e)(1-|\Dc|\e)-1$. To achieve this rate, node $s$ generates its messages from uniform distribution over $\Mc_s$. Repeating this process $|\Sc|$ times,  all the required extra messages, \ie $O_{\Sc}$ and $V_{\Sc}$, are carried to their intended receivers with arbitrarily small probability of error. Overall, these operations increase the blocklength from $Ln$ by
\begin{align*}
\sum_{s\in\Sc}{\d_1Lnr_{s}^{(n)}+ L(nr_{s}^{(n)}\e+1)\over (r_s-\e)(1-|\Dc|\e)-n^{-1}}\leq Ln\d_2,
\end{align*}
where $\d_2$ can be made arbitrarily small.

 Let $\bar{M}_{s\to t}$, $V_{s\to t}$ and $O_{s\to t}$ denote the reconstructions of $U_s$, $V_s$ and $O_s$, at sink node $t\in\Dc_s$, respectively. By the union bound,
\begin{align}
\P(U_s\neq \bar{M}_{s\to t}) &\leq \P(U_s\neq \bar{M}_s)+\P(M^L_s \neq \Mt^L_{s\to t})+\P((V_s,O_s) \neq (V_{s\to t},O_{s\to t})),
\end{align}
which can be made arbitrarily small.

\bibliographystyle{unsrt}
\bibliography{myrefs}
\end{document}